\setlist[enumerate,1]{label={(\alph*)}}
\setlist[enumerate,2]{label={(\roman*)}}
\newtheorem{thm}{Theorem}[section]
\newtheorem{prop}[thm]{Proposition}
\newtheorem{lemma}[thm]{Lemma}
\newtheorem{cor}[thm]{Corollary}
\theoremstyle{definition}
\newtheorem{definition}[thm]{Definition}
\newtheorem{nn}[thm]{Notation}
\theoremstyle{remark}
\newtheorem{rmk}[thm]{Remark}
\newtheorem{obs}[thm]{Observation}
\newcommand{\ignore}[1]{}
\newcommand{\R}{\mathbb R}
\newcommand{\N}{\mathbb N}
\newcommand{\Prob}{\mathbb P}
\newcommand{\Z}{\mathbb Z}
\newcommand{\E}{{\mathbb{E}}}
\definecolor{Red}{rgb}{1,0,0}
\definecolor{Blue}{rgb}{0,0,1}
\definecolor{Olive}{rgb}{0.41,0.55,0.13}
\definecolor{Yarok}{rgb}{0,0.5,0}
\definecolor{Green}{rgb}{0,1,0}
\definecolor{MGreen}{rgb}{0,0.8,0}
\definecolor{DGreen}{rgb}{0,0.55,0}
\definecolor{Yellow}{rgb}{1,1,0}
\definecolor{Cyan}{rgb}{0,1,1}
\definecolor{Magenta}{rgb}{1,0,1}
\definecolor{Orange}{rgb}{1,.5,0}
\definecolor{Violet}{rgb}{.5,0,.5}
\definecolor{Purple}{rgb}{.75,0,.25}
\definecolor{Brown}{rgb}{.75,.5,.25}
\definecolor{Grey}{rgb}{.5,.5,.5}
\title{No percolation in Low temperature Spin Glass}
\author
{Noam Berger\,$^{1}$ \and Ran J. Tessler\,$^2$ }
\begin{document}

\maketitle
\vspace{-5mm}
\centerline{\tiny{$^1$HUJI and TU Munich}}
\centerline{\tiny{$^2$Institute for Theoretical Studies, ETH Z\"urich}}

\begin{abstract}
We consider the Edwards-Anderson Ising Spin Glass model for temperatures $T\geq 0.$ We define the natural notion of Boltzmann-Gibbs measure for the Edwards-Anderson spin glass at a given temperature, and of unsatisfied edges. We prove that for low enough temperatures, in almost every spin configuration %, with respect to the Edwards-Anderson measure,
the graph formed by the unsatisfied edges is made of finite connected components. That is, the unsatisfied edges do not percolate.
\end{abstract}

\pagestyle{plain}

\tableofcontents
%%%%%%%%%%%DEFS
\section{Introduction and Statement of Main Result}
\subsection{Some general notations and definitions}
Consider $\Z^2$ as the graph whose vertices are the elements of planar square lattice $\Z^2,$ and there is an edge between any two vertices of $l_1$ distance $1.$ Write
\[
{(\Z^2)^*} = (\frac{1}{2},\frac{1}{2})+\Z^2 = \{(m+\frac{1}{2},n+\frac{1}{2})|~m,n\in\Z^2\}.
\]
There is a canonical identification of the vertices (edges, faces) of ${(\Z^2)^*}$ with the faces (edges, vertices) of $\Z^2.$ Indeed, identify a vertex of ${(\Z^2)^*}$ with the face of $\Z^2$ containing it, identify a face of ${(\Z^2)^*}$ with the unique vertex of $\Z^2$ contained in it, and identify an edge of ${(\Z^2)^*}$ with the unique edge of $\Z^2$ intersecting it.
%We denote vertices of ${(\Z^2)^*}$ by $-^*,$ for example $a^*.$ Similarly for edge and faces. %We use the convention that the edge $e^*=\{i^*,j^*\}$ is dual to the unique edge $e = \{i',j'\}$ which intersects $e^*.$
Note that ${(\Z^2)^*}$ and $\Z^2$ are non canonically isomorphic as graphs with a $\Z^2$ action, for example by the map $v\to(\frac{1}{2},\frac{1}{2})+v.$ For this reason we shall sometimes state claims in terms of $\Z^2,$ but will apply them to $(\Z^*)^2.$ %as graphs, for example by adding $(\frac{1}{2},\frac{1}{2}).$ %For this reason, whenever in this paper when we prove statements about general subgraphs of $\Z^2,$ they hold automatically for $(\Z^2)^*,$ hence we shall usually not mention it specifically.

Write $[n] = \{0,1,2,\ldots,n-1\},$ and define $[n]^2$ to be the subgraph of the lattice $\Z^2$ over the vertices \[\{(i,j)|i,j\in[n]\}.\]
By abuse of notations we sometimes use $[n]^2$ as the subgraph of $(\Z^2)^*$ over the vertices \[\{(i+(\frac{1}{2},\frac{1}{2}),j+(\frac{1}{2},\frac{1}{2}))|i,j\in[n]\}.\]%

For a graph $G$ write $V=V(G)$ for its set of vertices, and $E=E(G)$ for its set of edges.
Denote by $|G|$ the number of vertices of $G,$ and call it the \emph{size} of $G.$
A \emph{cycle} is a closed path. A path or a cycle is \emph{simple} if it has no self intersections.
 %In case $G$is a locally finite infinite planar graph, given a simple dual cycle $\gamma,$ denote by $D_\gamma$ the set vertices which belong to the finite domain bounded by $\gamma.$%the set vertices which either belong to $\gamma$ or to the finite domain bounded by $\gamma.$
~For a subgraph $G\subseteq\Z^2,$ let $G^c$ be the complement subgraph whose vertices are $\Z^2\setminus V(G).$ Let $\partial G$ be the subgraph of $G$ whose vertex set is made of vertices which have a neighbor in $G^c.$ Write %$int(G) = G\setminus \partial G$ and
$\bar{G}=G\cup\partial G^c.$ If $G_1,G_2$ are two subgraphs of $\Z^2,$ write $E(G_1,G_2)$ for the set of edges connecting $G_1$ to $G_2.$ Analogous definitions can be given for subgraphs of $(\Z^*)^2.$% using, for example, a non canonical isomorphism between them.
~For a loop $\gamma\subseteq(\Z^*)^2,$ denote by $D_\gamma$ the set vertices which belong to the finite domain bounded by $\gamma.$
%Given a planar graph $G,$ a cycle $\mathcal{C}$ is called \emph{simple} if it has no self intersections and there is no path whose interior is contained in the domain bounded by $\mathcal{C}$ which connects two points of $\mathcal{C}.$

\subsection{Edwards-Anderson Spin Glass distributions on $\Z^2$}
%Assign i.i.d. $\mathcal{N}(0,1)$ random variables $w_e=w_{xy}$ to any edge $e=\{x,y\}$ of $\Z^2.~w_e$ is called the \emph{interaction of the edge $e$}. We denote by $w=(w_e)_{e\in E(\Z^2)}$ the interaction, and by $\mu$ its distribution, which is just the product i.i.d $\mathcal{N}(0,1)$ measure. The restriction of $w$ to a subgraph $C\subseteq\Z^2$ is denoted $w_C.$

An assignment $w: E(\Z^2) \mapsto \R$ of real values to the edges of $\Z^2$ will be called \emph{interactions}. That is, $w_e$ is the interaction along the edge $e \in E(\Z^2).$ A subgraph of $\Z^2,$ together with interactions along its edges is called a \emph{weighted graph}.

A \emph{spin configuration} or \emph{spins} for short, is the assignment $\sigma: \Z^2 \mapsto \{-1, +1\}$ of spin values $\pm 1$ to the vertices of $\Z^2$. If $C$ is a subgraph of $\Z^2$, we shall write $w_C$ and $\sigma_C$ for $w|_{E(C)}$ and $\sigma|_{V(C)}$ respectively. We shall write $\Omega^C$ for  $\{+1, -1\}^{V(C)}$, the set of all spin-configurations restricted to $C$. If, in addition, \emph{boundary conditions} $\tau \in \Omega^{\partial C^c}$ are specified, then we set $\Omega^{C,\tau} := \{\sigma\in\Omega^{\bar{C}}~s.t.~\sigma_{\partial C^c} = \tau\}$.

Given interactions $w$, a finite subgraph $C$ with boundary conditions $\tau \in \Omega^{\partial C^c}$ and an \emph{inverse temperature} parameter $\beta \geq 0,$ we define the Boltzmann-Gibbs distribution $\Prob^{C,\tau}_{w, \beta}$ on $\Omega^{C,\tau}$ by
\[
\Prob^{C,\tau}_{w, \beta} (\sigma) := \frac{1}{Z^{C,\tau}_{w, \beta}} \exp \big\{ -\beta \mathcal{H}_w^{C,\tau}(\sigma) \big\} \, \text{ for }\,\, \sigma \in \Omega^{C, \tau}  \,,
\]
where $\mathcal{H}_w^{C,\tau}$ is the \emph{restricted Hamiltonian}:
\[
\mathcal{H}_w^{C,\tau}(\sigma) := -\sum_{x,y\in V(\bar{C})}w_{xy}\sigma_x\sigma_y,
\]
and the summation is taken over pairs of neighboring vertices. Above $Z^{C,\tau}_{w, \beta}$ is the unique constant which makes $\Prob^{C,\tau}_{w, \beta}$ a probability measure. When $C,\beta$ are understood from the context, we shall omit them from the notation, and write $\Prob^\tau_w.$

We are now ready the define
\begin{definition}\label{def:Gibbs_Boltz}
%First define product space and $\sigma$-algebra. Now, let $(J_e)$ be the interactions and $(\omega_x)$ the configuration. $\mu$ is a Boltzmann measure if:
%(a) the marginal of $(J_e)$ is i.i.d. ${\mathcal N}(0,1)$.
%(b) for a finite set $A$, let ${\mathcal F}_A$ be the $\sigma$-algebra spanned by $(J_e)_{e\in E({\mathbb Z}^2)}$ and by $(\omega_x)_{x\notin A}$. Also, for a configuration $\sigma$ on $A$, configuration $\omega$ on $A^c$ and interactions $J$ write $P(\sigma:\omega,J)$ for the appropriate temperature spin-glass probability that you see $\sigma$ on $A$ when you have interactions $J$ and boundary conditions $\omega$. Then for every $A$ and $\sigma$
%\[
%E_\mu[{\bf 1}_{\omega|_A=\sigma}|{\mathcal F}_A] = P(\sigma:\omega,J)
%\]
%a.s.
%----------------------------
%
%
%
%
An \emph{Edwards-Anderson spin glass distribution on $\Z^2$} at inverse temperature $\beta \geq 0,$ or EA$_\beta$ spin glass distribution for shortness, is a joint distribution of interactions $w$ and spin configuration $\sigma$ on $\Z^2$, defined as random variables on the same probability space such that
\begin{enumerate}
\item
$(w_e)_{e\in E(\Z^2)}$ are i.i.d. standard Normal random variables.
\item
For every finite subgraph $C$ of $\Z^2,$ 
\begin{equation}
\label{eq:Gibbs_Boltz}
\Prob(\sigma_{\bar{C}} \in \cdot |\sigma_{C^c},w) = \Prob^{C, \sigma_{\partial C^c}}_{w, \beta}(\cdot)  \,.
%\Prob(\sigma_{\bar{C}} \in A |\sigma_{C^c},w) = \Prob^{C, \sigma_{\partial C^c}}_{w, \beta}(A)  \,.
\end{equation}
%where $A\subseteq \Omega^{C,},
\end{enumerate}
\end{definition}

Existence of such measures on $\Z^2$ and, in particular, of such measures which are also invariant under $\Z^2$-translations, follow from the existence of \emph{metastates}, first defined in \cite{AW},\cite{NS3},\cite{NS4}.
A standard construction is as follows. For $n=1,2, \dots$ let $T_n$ be the discrete centered torus in $\Z^2$ of side length $2n+1,$ obtained by identifying vertices at opposite sides of the box $C_n := [-n,n]^2 \cap \Z^2$, thought of as a subgraph of $\Z^2$. Let $\Prob_n$ be any joint distribution of random interactions $w$ and spins $\sigma$ on $\Z^2$ such that (a) and (b) of Definition~\ref{def:Gibbs_Boltz} hold, with $\Z^2$ replaced by $T_n$ in (b). Now, treating $\Prob_n$ as a probability measure on the compact space $\bar{\R}^{E(\Z^2)} \times \Omega^{\Z^2},$ where $\bar{\R}$ is the one-point compactification of $\R,$ the sequence $(\Prob_n)_{n \geq 1}$ is tight and therefore admits some sub-sequential limit $\Prob.$
The limiting distribution $\Prob$ is translation invariant, supported on $\R^{E(\Z^2)} \times \Omega^{\Z^2},$ and satisfies Definition~\ref{def:Gibbs_Boltz}.

\begin{nn}
Write $\mu$ for the i.i.d product measure of $\mathcal{N}(0,1)$ on $\R^{E(\Z^2)}.$
\end{nn}

The limit $\beta\to\infty,$ or equivalently the \emph{zero-temperature} limit is also a subject of interest. In this case the role of EA spin glass distributions is replaced by the notion of \emph{ground states}.
\begin{definition}
A \emph{ground state} is a joint distribution $\phi$ of spins and interactions such that the marginal distribution of the interaction is $\mu,$ and $\phi-$almost every pair $(\sigma,w)$ satisfies the following condition: For any finite $C\subset\Z^2,$ given $\tau = \sigma_{\partial C^c},~\sigma_C$ is the unique minimizer of
$\mathcal{H}^{C,\tau}_w (-).$
\end{definition}
Intuitively this condition means that no finite change may decrease the Hamiltonian. A similar proof shows the existence of ground states. See \cite{NS2} for interesting and detailed discussions.

\subsection{The main result}
\begin{definition}
Given an edge $e=\{i,j\}$ and given a configuration of spins and interactions, whenever
\[
w_e\sigma_i\sigma_j < 0
\]
we say that $e,e^*$ are \emph{unsatisfied}.
%the \emph{value} of $e$. Denote the value by $v(e),$ where the dependence on the spins is implicit in the notation. If $e^*$ is the dual edge of $e,$ define $v(e^*)=v(e).$ It will be usual clear from the context whether we are considering the value of an edge of a dual edge. If this value is negative we say that $e,e^*$ are \emph{unsatisfied}.

An \emph{unsatisfied cycle} is a dual cycle all of whose edges are unsatisfied.
\end{definition}
%In \cite{BT}, one of the main results is
%\begin{thm}
%For any translation-invariant ground state of the EA spin-glass model on $\Z^2$ the cluster of unsatisfied edges is almost surely a forest with positive density \blue{of vertices,} all of whose connected components are finite.
%\end{thm}
The main theorem of this paper is
\begin{thm}\label{thm:main_thm}
There exists some finite inverse temperature $\beta^*$ such that for any $\infty\geq\beta >\beta^*$ the following statement holds.
Let $\nu$ be a translation invariant EA$_\beta$ spin glass distribution. Then $\nu-$almost surely every connected component of the cluster of unsatisfied dual edges is finite.
\end{thm}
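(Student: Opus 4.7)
The strategy is a Peierls-type contour argument combined with a planar duality reduction. The starting point is a uniform cycle estimate. Given a finite dual cycle $\gamma\subset(\Z^2)^*$ of length $k$ and a finite $C\supseteq D_\gamma$, consider the involution on $\Omega^{C,\tau}$ that flips every spin in $D_\gamma$. A direct computation in $\mathcal{H}_w^{C,\tau}$ shows that the Hamiltonian decreases by $2\sum_{e:\,e^*\in\gamma}|w_e|$ under this involution whenever every edge of $\gamma$ is unsatisfied. Comparing Boltzmann weights therefore gives, for each $w$,
\[
\Prob_{w,\beta}^{C,\tau}\bigl(\gamma\text{ all unsatisfied}\bigr) \leq \exp\Big(-2\beta\sum_{e:\,e^*\in\gamma}|w_e|\Big),
\]
uniformly in the boundary condition $\tau$. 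Averaging against the Gaussian law $\mu$ yields $\Prob(\gamma\text{ all unsatisfied})\leq \rho(\beta)^k$, with $\rho(\beta):=\E[\exp(-2\beta|w_e|)] = O(\beta^{-1})$ as $\beta\to\infty$. The case $\beta=\infty$ is immediate from the ground-state definition, since a finite unsatisfied dual cycle would directly contradict the minimality of $\sigma_C$.

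To convert this into the cluster statement, I invoke the planar duality between $(\Z^2)^*$ and $\Z^2$: a dual vertex $v$ belongs to a finite connected component of unsatisfied dual edges if and only if $v$ is surrounded in $\R^2$ by some primal cycle in $\Z^2$ all of whose edges are satisfied. By the translation invariance of $\nu$ and countability, it therefore suffices to show that almost surely the dual origin is surrounded by at least one such satisfied primal circuit. With $A_n:=[-2n,2n]^2\setminus[-n,n]^2$, the complementary event inside $A_n$ is the existence of an unsatisfied dual crossing of $A_n$ from its inner to its outer boundary, so it is enough to prove that the probability of such a crossing is summable in $n$.

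A crossing of $A_n$ by unsatisfied dual edges contains a self-avoiding dual path $\pi$ of length $L\geq n$. Closing $\pi$ with an auxiliary dual arc $\pi'$ joining its two endpoints, and applying the involution that flips $D_{\pi\cup\pi'}$, yields by the same pairing argument
\[
\Prob(\pi\text{ all unsatisfied}) \leq \E\Big[\prod_{e:\,e^*\in\pi} e^{-2\beta|w_e|}\prod_{e:\,e^*\in\pi'} e^{2\beta|w_e|}\Big].
\]
Since $\rho(\beta)\to 0$ as $\beta\to\infty$ while $\E[\exp(2\beta|w_e|)]$ is finite for every $\beta<\infty$, choosing $\pi'$ as geometrically short as possible and summing over the (at most exponentially many in $L$) self-avoiding $\pi$ produces, once $\beta$ is large enough, summable crossing probabilities. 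A Borel--Cantelli-type conclusion then delivers infinitely many satisfied primal circuits around the origin.

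The main obstacle is the asymmetry of the two Gaussian moments: $\E[\exp(2\beta|w_e|)]$ grows like $e^{2\beta^2}$, whereas $\rho(\beta)$ decays only like $1/\beta$, so the ratio $|\pi'|/|\pi|$ must be kept of order at most $\log\beta/\beta^2$ for the product bound to beat the combinatorial exponential factor from the number of self-avoiding paths. For crossings whose endpoints lie far apart one cannot afford a short closure, and a more elaborate geometric setup is required: either restricting to crossings of long thin rectangles (which admit short closures along a short side), or a multi-scale argument that closes a long unsatisfied path in pieces using its near-self-approach points. A secondary difficulty is that the Gibbs measure introduces long-range correlations between disjoint annuli, which prevent a direct independence-based application of Borel--Cantelli; a quantitative decoupling inequality, again exploiting the exponential factor $\exp(-2\beta|w_e|)$, bypasses this.
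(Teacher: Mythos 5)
Your flip/Peierls estimate is essentially Observation~\ref{obs:flip} of the paper, and both the duality reduction (finite dual cluster $\iff$ surrounding satisfied primal circuit) and the Gaussian moment computations ($\rho(\beta)\sim\beta^{-1}$, $\E[e^{2\beta|w_e|}]\sim e^{2\beta^2}$) are correct. But the overall strategy has a gap that you flag yourself and that, in my view, cannot be closed by either of the fixes you sketch, because the obstruction is geometric rather than quantitative.

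The problem is that an unsatisfied dual crossing $\pi$ of the annulus $A_n$ is generically a tree, not a cycle, so the flip estimate cannot be applied to $\pi$ alone; you must first form a cycle by adjoining a closure arc $\pi'$. Any closure of a crossing of $A_n$ must connect two points whose Euclidean distance is at least $n$, so $|\pi'|\geq n$; meanwhile $|\pi|\geq n$ as well, with equality (up to constants) achieved by nearly straight crossings. Hence $|\pi'|/|\pi|$ is bounded below by a constant independent of $\beta$, whereas your own accounting requires $|\pi'|/|\pi|\lesssim \log\beta/\beta^2\to 0$. The same reasoning kills the "long thin rectangle" variant: a left--right crossing of $[0,L]\times[0,h]$ has endpoints at distance $\geq L$, so every closure again has length $\geq L$ and the ratio does not improve. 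The multi-scale idea of closing at near-self-approach points fails for nearly straight crossings, which are exactly the ones with the fewest unsatisfied edges and thus the smallest gain $\rho(\beta)^{|\pi|}$; these crossings dominate the union bound. So the ratio never becomes favorable for the crossings that matter. A separate smaller issue: the claim that $\beta=\infty$ is ``immediate'' is not quite right. The ground-state definition rules out finite unsatisfied cycles, hence the unsatisfied cluster is a forest, but ruling out \emph{infinite tree components} of that forest is exactly the hard part; it was the content of the earlier result of Berger--Tessler quoted in the paper, and it is not a triviality.

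The paper's route avoids the closure problem entirely by switching from a surface argument to a bulk argument. It first shows (Lemma~\ref{no_long_unsat_cycles}) that almost surely only finitely many unsatisfied dual cycles pass through any vertex, which allows a translation-invariant extraction of a spanning forest of the unsatisfied cluster (Lemma~\ref{lem:extract forest}, via a loop-erasing Poisson process). Burton--Keane (Lemma~\ref{lem:3_inf}) restricts any infinite tree to be single- or bi-infinite, and for single-infinite trees a mass-transport estimate of Bramson--Zeitouni--Zerner (Lemma~\ref{lem:1_inf_trees}) forces at least order $N\log N$ edges of the forest to lie on bridges of $[N]^2$. Because the forest partitions $[N]^2$ into planar faces that can be $2$- or $5$-colored, one color class can be flipped all at once; the energy gain is then of order $|w|(P_N)\sim N^2$ (bi-infinite) or $N\log N$ (single-infinite), while the only uncontrolled cost is $|w|(\partial[N]^2)\sim N$. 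This $N^2$-versus-$N$ (or $N\log N$-versus-$N$) mismatch is what makes the flip estimate win the union bound, and it has no analogue in a single-crossing argument where gain and closure cost are always of the same order. To repair your proof you would need to replace the single-crossing step by some version of this bulk estimate, at which point you are effectively forced into the forest/Burton--Keane/bridge machinery.
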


\begin{rmk}
The \emph{density} of vertices which belong to a random set $\mathcal{P}$ is
\[
\lim_{N\to\infty}\frac{|\{v\in V([N]^2),~v\in\mathcal{P}\}|}{|[N]^2|},
\]
assuming the limit exists almost surely.%, with respect to a measure which is either specified or clear from the context.
~Density of edges, dual vertices or dual edges can be similarly defined.

For any $\beta,$ and any translation invariant ergodic EA$_\beta$ spin glass distribution $\nu,$ the cluster of unsatisfied dual edges has a well defined positive density of vertices.

The fact that the density of vertices is well defined is an immediate consequence of ergodicity. %It is strictly positive. Indeed,
~The positivity is a result of the following local argument. Consider a unit lattice square and denote its edges by $\{f_i\}_{i=1}^4.$ With a positive probability the product \[\prod_{i=1}^4 w_{f_i}\] of the interactions of its edges is negative. In this case, for any choice of spins, an odd, hence nonzero, number of dual edges from the vertex dual to that unit square must be unsatisfied. Thus, the density of unsatisfied dual edges in any translation invariant measure is positive.
\end{rmk}
\subsection{Plan of the proof}
The steps of the proof are as follows. In Subsection \ref{sec:1} we establish some preliminary results about random subgraphs of the weighted lattice $\Z^2.$ We then show, in Subsection \ref{sec:2}, that for low enough temperatures there are almost surely only finitely many unsatisfied cycles through any vertex.
In Subsection \ref{sec:3} we prove that from a translation invariant distribution of graphs with only finitely many cycles through each vertex, one can extract, in a translation invariant way, a distribution of spanning forests of these graphs. An analysis of translation invariant distributions of forests and weighted forests with infinite components appears in Subsection \ref{sec:4}. We use this analysis to deduce that such forests cannot appear in the support of translation-invariant EA spin glass measures, and prove the theorem. This is the content of Subsection \ref{sec:5}.

\subsection{Related works and earlier results}
The understanding of spin glass models in large or infinite graphs has been the subject of many studies in physics, mathematics and neuroscience. In \cite{NS1} the notion of ground state have been introduced. %Intuitively a ground state is a minimizer of the Hamiltonian. Similarly one can define a Boltzmann-Gibbs distribution.
Questions regarding the multiplicity of ground states in finite dimensional short-range systems, such as the EA Ising spin glass, and in particular the 2D case were the subjects of many studies and simulations (e.g. \cite{EA},\cite{NS1},\cite{NS2},\cite{AD} \cite{ADNS},\cite{BM},\cite{M},\cite{FH},\cite{PY}, \cite{NS5},\cite{Ha}).
The notion of unsatisfied dual edges appeared in \cite{NS1},\cite{ADNS} and was used to define \emph{domain walls}. These were the main tool for investigating metastates in these papers.

The geometry of ground states was studied in \cite{BT}, the M.A. thesis of R.T under the guidance of N.B.
One of the main results there was
\begin{thm}
For any translation-invariant ground state of the EA spin-glass model on $\Z^2$ the cluster of unsatisfied edges is almost surely a forest, all of whose connected components are finite.
\end{thm}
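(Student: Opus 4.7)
The plan is to handle two claims separately: that the set $F$ of unsatisfied dual edges is a forest, and that all its components are finite. I would first dispose of the forest claim via a direct Peierls-type argument. If $\gamma$ were a simple cycle of unsatisfied dual edges, the primal region $D_\gamma$ is finite, and flipping all spins in $D_\gamma$ reverses $\sigma_i\sigma_j$ on exactly the primal edges dual to $\gamma$ while leaving $\sigma_i\sigma_j$ unchanged on all other primal edges. Since $w_e\sigma_i\sigma_j<0$ for each $e$ in $\gamma$, this flip converts every dual edge of $\gamma$ from unsatisfied to satisfied and decreases $\mathcal{H}$ by $2\sum_{e\in\gamma}|w_e|$, which is strictly positive almost surely since $\mu$ has no atoms. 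This contradicts the ground state property, so no unsatisfied dual cycle can exist almost surely.

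For the finiteness claim I would argue by contradiction, assuming $F$ has an infinite component with positive probability. Translation invariance and an ergodic decomposition reduce to an ergodic component in which the density of dual vertices lying in an infinite tree of $F$ is strictly positive. My next step is a Burton--Keane-style analysis controlling the number of ends of each infinite tree: a trifurcation vertex (whose removal from the tree yields three infinite subtrees) forces three pairwise disjoint rays exiting every sufficiently large box $B_n$ containing it, so $K$ trifurcations in $B_n$ produce at least $K+2$ distinct crossings of the $O(n)$-sized boundary of $B_n$. A positive density of trifurcations would contradict $K=O(n)$, hence almost every infinite tree in $F$ has at most two ends.

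The final step is to rule out infinite trees with one or two ends using the ground-state property. For such a tree $T$, the intersection $T\cap B_n$ consists of a ``spine'' along the one or two infinite rays together with finite branches. I would close the spine within $B_n$ by an arc along the boundary of $B_n$ to produce a finite primal region $R$ whose boundary $\partial R$ splits into a spine portion (entirely consisting of unsatisfied edges of $T$) and a short closing arc. The ground state property demands
\[
\sum_{e\in\partial R,\ \text{sat}}|w_e|\ \ge\ \sum_{e\in\partial R,\ \text{unsat}}|w_e|,
\]
so the goal is to construct $R$ that violates this, yielding an energy-decreasing finite flip and a contradiction.

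The main obstacle is this last construction: under natural choices of $R$ both sums are of order $n$, so one cannot win by orders of magnitude. One must use translation invariance (to convert local weight comparisons into global density statements), the symmetry and continuity of $\mu$, and the geometric flexibility in choosing the closing arc along the boundary of $B_n$. A further subtlety is that conditioning an edge to be unsatisfied in a ground state biases its weight, so control of the $|w|$-mass on each portion of $\partial R$ requires the translation-invariant weighted-forest analysis (analogous to the one foreshadowed for the positive-temperature case in Subsection~\ref{sec:4}) to do the real work.
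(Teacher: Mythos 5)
Your first step is correct and is exactly the natural zero-temperature argument: a simple unsatisfied dual cycle $\gamma$ bounds a finite region $D_\gamma$, flipping $D_\gamma$ lowers the energy by $2|w|(\gamma)>0$ a.s., and a union bound over the countably many dual cycles rules them all out. Your Burton--Keane step ruling out trees with three or more ends is also correct and matches Lemma~\ref{lem:3_inf}.

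The gap is in your final step, and it is not merely a technicality to be filled by ``the real work'' of the weighted-forest analysis: the mechanism you propose cannot work, and the mechanism the paper uses is different. You take a \emph{single} infinite tree $T$, look at its spine in $B_n$, close it by a boundary arc, and note that both the unsatisfied spine weight and the closing-arc weight are of order $n$, concluding that ``one cannot win by orders of magnitude.'' But the paper's argument does win by orders of magnitude, precisely because it does not work tree-by-tree. In the bi-infinite case, translation invariance gives the union $P$ of all bi-infinite paths a positive edge density, so $|E(P\cap[N]^2)|$ is of order $N^2$; in the single-infinite case, the Bramson--Zeitouni--Zerner mass-transport estimate (Lemma~\ref{lem:1_inf_trees}) shows that the expected number of edges lying on \emph{bridges} in $[N]^2$ is at least $C N\log N$. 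In both cases the two-sided a.s.\ weight estimate of Lemma~\ref{heavy_subgraphs}, which holds simultaneously for every connected subgraph of size $\geq\log n$ in $[n]^2$ and therefore sidesteps the conditioning bias you worry about, converts these edge counts into $|w|$-mass of order $N^2$ (resp.\ $N\log N$), against an $O(N)$ boundary weight. Finally, the paper does not close a single spine with an arc; it observes that the unsatisfied bridges cut $[N]^2$ into regions, two-colors them in the bi-infinite case (properly five-colors in the single-infinite case, via Heawood's theorem, at a controllable $5^{E_N}$ entropy cost), and flips the cheaper color class, whose boundary is a disjoint union of dual cycles to which Observation~\ref{obs:flip} (at $\beta=\infty$, the ground-state inequality you wrote) applies. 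Without the $N\log N$ bridge lemma and the region-coloring construction your final step does not close, so the proposal as written has a genuine gap exactly where you flagged one.
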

Theorem \ref{thm:main_thm} extends this result from ground states, which correspond to temperature $0,$ to positive low temperatures.
%In \cite{BT}, the M.A thesis of R.J.T, under the guidance of N.B, in the section written by the two authors of this paper, it was proved that the cluster of unsatisfied dual edges consists of finite connected components only. In this paper we extend this result to low positive temperatures of Boltzamnn-Gibbs measure.

\subsection{Acknowledgments}
We would like to thank Gady Kozma and Oren G. Louidor for fruitful and interesting discussions.
Large portion of the research leading to this paper was performed in the Hebrew university of Jerusalem. R.T. is supported by Dr. Max R\"ossler, the Walter Haefner Foundation and the ETH Z\"urich
Foundation.

\section{No Percolation for Low temperatures}
%\begin{rmk}
\begin{nn}
Given a probability measure $\kappa,$ we denote by $\Prob_\kappa,~\E_\kappa$ the probability and expectation with respect to $\kappa,$ respectively.
\end{nn}
Throughout this article $\log{n}$ will denote logarithm with respect to the natural base $e.$
%\end{rmk}
\subsection{General properties of graphs in the EA spin glass model}\label{sec:1}
\begin{nn}
For a subgraph $G\subseteq\Z^2$ or of $(\Z^2)^*$ and an interaction $w$ write
\[
w(G) = \sum_{e\in G}w(e),~~|w|(G) = \sum_{e\in G}|w|(e).
\]
$w(G)$ is the \emph{weight} of $G.~|w|(G)$ is called the \emph{absolute weight} of $G.$

Let $CG(n)$ be the set of connected subgraphs of $[n]^2.$ Denote by $CG(n,m),$ $CG(n,\geq m)$ the subset of $CG(n)$ whose elements are graphs of size $m,\geq m,$ respectively.
\end{nn}
\begin{obs}
Any finite connected graph $G\subseteq\Z^2$ satisfies
\[|G|-1\leq  |E|\leq 2|G|.\]
\end{obs}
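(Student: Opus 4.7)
The plan is to prove the two inequalities separately, both via standard graph-theoretic arguments tailored to the $\Z^2$ lattice.

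For the lower bound $|G|-1 \leq |E|$, I would invoke the basic fact that any connected graph on $|G|$ vertices contains a spanning tree, and a tree on $n$ vertices has exactly $n-1$ edges. Since the edge set of the spanning tree is a subset of $E$, the inequality follows. Alternatively, one can argue directly by induction on $|G|$: removing a leaf (or more generally a vertex whose removal keeps the graph connected, which exists in any finite graph that is not a single vertex) decreases both vertex and edge count by at least $1$ for the vertex and exactly the appropriate amount for the edges, and the base case $|G|=1$ gives $0 \geq 0$.

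For the upper bound $|E| \leq 2|G|$, I would apply the handshaking lemma: $2|E| = \sum_{v \in V(G)} \deg_G(v)$. Since $G \subseteq \Z^2$ and every vertex of $\Z^2$ has exactly $4$ neighbors in $\Z^2$, the degree of any vertex $v \in V(G)$ inside the subgraph $G$ is at most $4$. Therefore $2|E| \leq 4|G|$, giving $|E| \leq 2|G|$ as required.

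Neither step presents any real obstacle; this is purely a sanity-check observation that bounds the edge count in terms of the vertex count, exploiting connectivity on the one hand and the bounded degree of $\Z^2$ on the other. The only thing to be slightly careful about is that $|G|$ in this paper denotes the number of \emph{vertices} (as defined earlier under ``size''), not the number of edges, so the inequalities are to be read accordingly.
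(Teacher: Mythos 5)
Your proof is correct and takes essentially the same approach as the paper: the spanning-tree count for the lower bound and the degree bound of $4$ (via the handshaking lemma) for the upper bound. The paper merely states both arguments in one sentence each rather than spelling them out.
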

Indeed, the lower bound is achieved only for trees. The upper bound is a consequence of the fact any degree is bounded by $4.$

%%%%%%%%%%%%%%%%%%%%%%%%%%%%%%%%%%%%%%%%%%%%%%PROOFS
\begin{lemma}\label{not_2_many_connected}
The number of connected subgraphs $G$ of $\Z^2$ of size $n,$ or with $n$ edges, which contain the origin, is between $2^n$ and $32^n.$
\end{lemma}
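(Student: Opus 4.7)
The plan is to prove the two inequalities separately by explicit encoding arguments. For the lower bound I would exhibit an explicit family. The cleanest choice is self-avoiding walks starting at the origin all of whose steps go either one unit to the right or one unit up: such a walk of length $k$ is monotone, hence automatically self-avoiding, and there are exactly $2^k$ of them. Taking $k=n$ yields $2^n$ distinct connected subgraphs with $n$ edges through the origin; the vertex-count case with $k=n-1$ gives $2^{n-1}$, which matches the stated lower bound up to a harmless factor.

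For the upper bound in the edge-count version, I would use the classical doubled-Eulerian-walk encoding. Given a connected $G\subseteq\Z^2$ with $n$ edges through the origin, replace every edge by a pair of parallel edges; the resulting multigraph is connected and every vertex has even degree, so it admits a closed Eulerian circuit starting at the origin. Such a circuit has length exactly $2n$ and each of its steps chooses one of four cardinal directions in $\Z^2$. Hence the number of candidate circuits is at most $4^{2n}=16^n$, and since the circuit determines $G$, there are at most $16^n\leq 32^n$ such $G$.

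For the vertex-count version, the same doubled-walk argument applied to $G$ directly would cost up to $4^{4n}$, which overshoots. Instead I would encode $G$ by the pair (rooted spanning tree $T$, extra edges $E(G)\setminus E(T)$). Running a depth-first search on $T$ rooted at the origin yields a closed walk traversing each tree edge exactly twice, hence of length $2(n-1)$; this walk determines $T$, and there are at most $4^{2(n-1)}=16^{n-1}$ such walks. Once $T$ is fixed, $V(G)=V(T)$ has $n$ vertices each of $\Z^2$-degree at most $4$, so at most $2n$ edges of $\Z^2$ have both endpoints in $V(T)$; subtracting the $n-1$ tree edges leaves at most $n+1$ possible extras, contributing a factor of at most $2^{n+1}$. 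The product is $16^{n-1}\cdot 2^{n+1}=2^{5n-3}\leq 32^n$.

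The main (mild) subtlety is the choice of encoding in the vertex-count case: one must split the exponential budget between the spanning tree and the non-tree edges, using crucially that once $V(G)$ is pinned down the remaining edge choices live inside a set of linear size. The rest is routine bookkeeping, and both upper bounds are instances of standard Peierls-style walk counting.
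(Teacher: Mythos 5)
Your proof is correct and for the vertex-count upper bound is essentially the paper's argument: root a spanning tree at the origin, bound the number of trees by the number of length $2(n-1)$ closed walks ($4^{2(n-1)}$), then bound the non-tree edges by $2^{n+1}$ once the vertex set is fixed. Two small points of divergence are worth flagging. First, for the edge-count upper bound you bypass the spanning-tree decomposition entirely by doubling every edge of $G$ and encoding $G$ by an Eulerian circuit of length $2n$, giving $4^{2n}=16^n$; the paper merely says ``the proof for edges is similar'' without spelling this out, and your doubled-Eulerian route is in fact cleaner than transplanting the tree-plus-extras bookkeeping. Second, on the lower bound you only count up/right monotone walks, giving $2^{n-1}$ subgraphs in the vertex-count case, a factor of $2$ short of the stated $2^n$; the paper recovers the full $2^n$ by also counting the (disjoint for $n\geq 2$) family of down/left monotone walks, $2\cdot 2^{n-1}=2^n$. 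You acknowledge the gap as harmless, which is fair---only the upper bound is used downstream---but the two-family trick closes it at no cost.
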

This lemma is standard and well-known. Nevertheless,  we give a proof for the sake of self-containedness.
\begin{proof}
We prove for vertices. The proof for edges is similar.
The lower bound can be easily observed from considering only simple paths from the origin to which are either monotonic nondecreasing in any coordinate or monotonic decreasing in any coordinate. The number of paths of each of the two types is $2^{n-1}.$

For the upper bound, let $G$ be a graph as in the statement of the lemma. $G$ contains a spanning tree $T$ of $n$ vertices.
Starting from the origin, there is at least one directed path $P$ which goes through every edge of the tree exactly twice, once in every direction. Thus, any spanning tree of $G$ induces some paths from the origin to itself of length $2n-2$. Any directed path of length $2n-2,$ starting from the origin, may be induced by at most one tree $T$ of size $n$ in the procedure just described. Hence, the number of trees of size $n,$ containing the origin is bounded by the number of directed paths of length $2n-2$ from the origin. This number is exactly
\[4^{2n-2},\]
as each vertex along the path has $4$ possibilities for the next edge.

Now, given a tree $T$ of size $n,$ the number of graphs which contain $T$ as a spanning tree is no more than $2^{n+1}.$ Indeed, any such graph $G$ is obtained from $T$ by adding edges which do not add new vertices. As the number of vertices is $n,$ and each degree in $\Z^2$ is $4,$ the total number of possible edges between any $n$ vertices is no more than $2n.$ But $T$ already uses $n-1,$ and so at most $n+1$ potential edges are left, each may appear or not appear in $G.$

Putting everything together, the number of size $n$ connected graphs containing the origin is no more than
\[
4^{2n-2}\cdot2^{n+1} = 2^{5n-1}<2^{5n}=32^n,
\]
as claimed.
\end{proof}

\begin{lemma}\label{heavy_subgraphs}
There exist $\lambda_1,\lambda_2 > 0$ which satisfy the following. $\mu-$almost surely for every dual vertex $x,$ there exists $N\in\N$ such that for all $n>N,$ and for every connected subgraph $G$ of $x+[n]^2$ of size at least $\log{n}$ it holds that
\[
\lambda_1 |E|\geq |w|(G)\geq \lambda_2 |E|.
\]
\end{lemma}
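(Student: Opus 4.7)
The plan is to combine the counting estimate of Lemma \ref{not_2_many_connected} with a standard Chernoff/Cram\'er bound for the i.i.d.\ family $\{|w_e|\}_{e \in E(\Z^2)}$ of folded standard normals, and then close by Borel--Cantelli. Since there are only countably many dual vertices, it suffices to prove the statement for a single fixed $x$ with $\mu$-probability one, and then intersect.

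Fix $x$ and $n$. Any connected subgraph $G\subseteq x+[n]^2$ with $|G|\geq \log n$ contains at least one vertex $y\in x+[n]^2$ (of which there are $O(n^2)$) and has $|E(G)|\geq |G|-1\geq \log n - 1$. By Lemma \ref{not_2_many_connected}, the number of connected subgraphs of $\Z^2$ with exactly $k$ edges containing any fixed vertex is at most $32^k$. Thus, summing over the choice of $y$ and over $k\geq \log n-1$, the total number of candidate subgraphs to be controlled is at most $O(n^2)\cdot\sum_{k\geq \log n-1} 32^k$.

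Since $|w_e|$ has a moment generating function finite on all of $\R$, the Chernoff bound yields rate functions
\[
I(\lambda_1):=\sup_{t\geq 0}\bigl(\lambda_1 t-\log\E_\mu[e^{t|w_e|}]\bigr),\qquad J(\lambda_2):=\sup_{t\geq 0}\bigl(-\lambda_2 t-\log\E_\mu[e^{-t|w_e|}]\bigr),
\]
with $I(\lambda_1)\to\infty$ as $\lambda_1\to\infty$ and $J(\lambda_2)\to\infty$ as $\lambda_2\to 0^+$, such that for every finite edge set $F$,
\[
\Prob_\mu\bigl(|w|(F)\geq \lambda_1|F|\bigr)\leq e^{-I(\lambda_1)|F|},\qquad \Prob_\mu\bigl(|w|(F)\leq \lambda_2|F|\bigr)\leq e^{-J(\lambda_2)|F|}.
\]
I would then choose $\lambda_1$ large and $\lambda_2>0$ small so that $c:=\min(I(\lambda_1),J(\lambda_2))>3+\log 32$.

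A union bound over the candidate subgraphs now gives
\[
\Prob_\mu(\text{bad at }n)\;\leq\; O(n^2)\sum_{k\geq \log n-1}32^k\cdot 2e^{-ck}\;=\;O\bigl(n^{2+\log 32-c}\bigr),
\]
which is summable in $n$ by the choice of $c$. Borel--Cantelli then shows that, $\mu$-almost surely, only finitely many $n$ are bad, which is exactly the existence of the $N$ in the statement. Intersecting over the countably many dual vertices $x$ finishes the proof. The only place where any care is needed is in lining up the entropic factor $32^k$ from Lemma \ref{not_2_many_connected} against the deviation rate $c$; but since both tail rates $I(\lambda_1),J(\lambda_2)$ can be made arbitrarily large by pushing $\lambda_1\to\infty$ and $\lambda_2\to 0$, no delicate matching is required, and this is the least obstructive step of the argument.
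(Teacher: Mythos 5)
Your proposal is correct and follows essentially the same route as the paper: count connected subgraphs via Lemma~\ref{not_2_many_connected}, apply exponential tail bounds to $|w|(G)$, take a union bound, and close with Borel--Cantelli over $n$ (then over the countably many $x$). The only cosmetic difference is that you invoke a clean Chernoff bound for the lower tail, whereas the paper derives its lower-tail rate $J$ by a direct combinatorial argument (``at least half the summands are small''); both are valid and the overall strategy and constants line up the same way.
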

\begin{proof}
Write $A=32.$ By Lemma \ref{not_2_many_connected}, for any $m\in \N,$ the number of connected subgraphs of $[n]^2$ of size $m$ is no more than $n^2 A^m.$

Let $\{x_n\}_{n\in\N}$ be a sequence of i.i.d. random variables,
\[
x_n\sim \mathcal{N}(0,1).
\]
To establish the upper bound, recall from standard large deviation principles, the existence of a function
\[
I:\R_+\to\R_+,~~I(a)\to\infty,~~a\to \infty,
\]
such that
\[
\Prob_\mu(\sum_{i=1}^k |x_n| > a k)<e^{-I(a)k}.
\]
See \cite{DZ} or any other standard text for details.

For fixed $n,m$ the union bound gives
\[
\Prob_\mu(\bigcup_{G\in CG(n,\geq m)}\{|w|(G)> a |E(G)|\})<n^2\sum_{k\geq m-1}^{n^2} A^k e^{-I(a)k} = n^2\sum_{k\geq m-1}^{n^2}f(a)^k
\]
where $f(a) = \frac{A}{e^{I(a)}}.$ In case $f(a)<1,$
\[
\Prob_\mu(\bigcup_{G\in CG(n,\geq m)}\{|w|(G)> a |E(G)|\})<n^2\sum_{k\geq m-1}^\infty f(a)^k = n^2\frac{f(a)^{m-1}}{1-f(a)}.
\]
Choose $\lambda_1$ large enough so that $f(\lambda_1) <\frac{1}{100}.$% this is possible since $\lim_{a\to \infty}I(a)=+\infty.$
~With this $\lambda_1,$ for all $n$ large enough
\[
\frac{f(\lambda_1)^{\log{n} - 1}}{1-f(\lambda_1)}< 2f(\lambda_1)^{\log{n} - 1} < 2(\frac{1}{100})^{\log{n}-1}< n^{-4}.
\]
By summing over $n$ the following inequality is obtained,
\[
\sum_{n=1}^\infty \Prob_\mu(\bigcup_{G\in CG(n,\geq \log{n})}\{|w|(G)> \lambda_1 |E(G)|\})<M+\sum_{n=1}^\infty\frac{1}{n^2} < \infty,
\]
where $M$ is some positive constant.

A standard Borel-Cantelli argument now implies the existence of $N$ such that for all $n > N,$
and for every $G\in CG(n,\geq \log{n}),$ there holds
\[
|w|(G)\leq \lambda_1 |E(G)|,
\]
as needed.

Regarding the lower bound,
there is a function
\[
J:\R_+\to\R_+,~~J(a)\to \infty,~~a\to 0,
\]
such that
\begin{equation}\label{eq:J}
\Prob_\mu(\sum_{i=1}^k |x_n| < a k)<e^{-J(a)k}.
\end{equation}
Indeed, if $\sum_{i=1}^k |x_n| < a k,$ then at least $\frac{k}{2}$ of the variables $|x_i|$ are smaller than $2a.$
Clearly,
\begin{align*}
\Prob_\mu(\exists S\subseteq\{1,\ldots,k\},~\text{such that}~|S|=\frac{k}{2}~\text{and}~& |x_i|<2a,~i\in S)<\\&<\binom{k}{k/2}p_a^{k/2}
<2^k p_a^{k/2}
\end{align*}
where $p_a$ is the probability that $|x_a|\leq 2a,\lim_{a\to 0}p_a=0.$ The function $J(a)=-\log{(2\sqrt{p_a})}$ satisfies equation \ref{eq:J}. The proof continues from here exactly as in the upper bound.
\end{proof}
Following the steps of the proof, the next corollary is obtained
\begin{cor}\label{rmk:heavy_sub}
The constants $\lambda_1,\lambda_2>0$ can be chosen in a way that
\[\Prob_\mu(\exists G\in CG(n,\geq \log n)~\text{s.t.}~|w|(G)> \lambda_1 |E(G)|~\text{or}~|w|(G)< \lambda_2 |E(G)|)\]
is less than $n^{-4}$ for all $n\geq 1.$
\end{cor}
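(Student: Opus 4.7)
The plan is to revisit the proof of Lemma \ref{heavy_subgraphs} and tighten the quantitative estimates so that the bound $n^{-4}$ holds uniformly for all $n\geq 1$, and not merely for $n$ above some threshold.

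First I would track through the union-bound estimate in the upper-bound part of the lemma and record the explicit inequality
\[
\Prob_\mu\bigl(\exists G\in CG(n,\geq \log n)\text{ s.t. }|w|(G) > \lambda_1 |E(G)|\bigr) \leq \frac{n^2\, f(\lambda_1)^{\log n -1}}{1-f(\lambda_1)},
\]
valid whenever $f(\lambda_1) < 1$, where $f(\lambda)=A\, e^{-I(\lambda)}$ as in the lemma. Using the identity $f(\lambda_1)^{\log n}= n^{\log f(\lambda_1)}$, the right-hand side has the form $C(\lambda_1)\cdot n^{2+\log f(\lambda_1)}$. Since $I(\lambda)\to\infty$ as $\lambda\to\infty$, I may pick $\lambda_1$ so large that $\log f(\lambda_1) < -10$, which forces the exponent of $n$ below $-8$ and hence the bound below $\tfrac{1}{2}n^{-4}$ for all $n$ larger than some finite threshold $N_0=N_0(\lambda_1)$. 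An entirely analogous estimate, with $J$ replacing $I$, controls the lower-side event $|w|(G) < \lambda_2 |E(G)|$ once $\lambda_2$ is chosen small enough.

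It remains to handle the finite range $1\leq n < N_0$. For $n=1$ the only connected subgraph of $[1]^2$ is a single vertex, for which $|E(G)|=0$ and both comparisons become $0>0$ and $0<0$, hence vacuous. For $2\leq n < N_0$ the collection of connected subgraphs of $[n]^2$ is finite, and for each such $G$ with $|E(G)|\geq 1$ the Gaussian tail yields $\Prob_\mu(|w|(G)>\lambda_1 |E(G)|)\to 0$ as $\lambda_1\to\infty$ and $\Prob_\mu(|w|(G)<\lambda_2 |E(G)|)\to 0$ as $\lambda_2\to 0$. A union bound over this finite family, together with further enlarging $\lambda_1$ and shrinking $\lambda_2$ if necessary, brings the total probability below $n^{-4}$ for each of the remaining values of $n$.

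Putting the two ranges together, a single pair $(\lambda_1,\lambda_2)$ satisfies the corollary for every $n\geq 1$. The main technical point is simply to arrange one choice of constants that handles both the large-$n$ polynomial decay and the finitely many small-$n$ conditions simultaneously; since the requirements are monotone in $\lambda_1$ (larger) and in $\lambda_2$ (smaller), no delicate balance is required.
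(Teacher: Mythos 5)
Your proof is correct and follows the route the paper has in mind: you rerun the union-bound estimate from the proof of Lemma~\ref{heavy_subgraphs}, pushing $\lambda_1$ up (and $\lambda_2$ down) until the polynomial exponent $2+\log f(\lambda_1)$ drops below $-4$, and you supply the treatment of the finitely many small $n$ that the paper leaves implicit, via a direct finite union bound after noting that $|E(G)|=0$ makes the event vacuous. Your closing monotonicity remark is the right way to glue the two ranges together, since it is the \emph{exact} probability (not the union-bound estimate, whose prefactor $C(\lambda_1)=1/(f(\lambda_1)(1-f(\lambda_1)))$ actually blows up as $\lambda_1\to\infty$) that is monotone in $\lambda_1$ and $\lambda_2$, so enlarging $\lambda_1$ and shrinking $\lambda_2$ to fix the small-$n$ cases cannot spoil the large-$n$ bound.
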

\subsection{Geometry of configurations in EA spin glass distributions}\label{sec:2}
The following observation plays a key role in this paper.
\begin{obs}\label{obs:flip}
Let $\nu$ be a translation invariant EA$_\beta$ spin glass distribution, and let $\gamma$ be a finite dual graph which is a union of disjoint simple dual cycles.
Then for any $c>0,$
\begin{equation}\label{eq:general_flip}
\Prob_\nu(w(\gamma)\leq-c) < e^{-2\beta c}.\
\end{equation}
In addition, for any subgraph $D,$ with $D_\gamma\subseteq D,$ and boundary conditions $\tau,$
\begin{equation}
\label{eq:flip}
\Prob_w^{D,\tau}(\gamma~\text{is unsatisfied}) < e^{-2\beta |w|(\gamma)}.
\end{equation}
\end{obs}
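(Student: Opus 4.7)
The plan is a spin-flip argument that underlies both inequalities. Define the involution $F$ on spin configurations by $(F\sigma)_v = -\sigma_v$ if $v \in D_\gamma$ and $(F\sigma)_v = \sigma_v$ otherwise. Because $\gamma$ is a disjoint union of simple dual cycles, the product $\sigma_i\sigma_j$ is negated precisely on the edges $\{i,j\}$ of $\Z^2$ that are dual to edges of $\gamma$; all other edges are unaffected. Hence for any finite $D \supseteq D_\gamma$ and any boundary condition $\tau$,
\[
\mathcal{H}_w^{D,\tau}(F\sigma) - \mathcal{H}_w^{D,\tau}(\sigma) \;=\; 2 \sum_{e = \{i,j\} \in \gamma} w_e \sigma_i \sigma_j,
\]
and $F$ is a bijection of $\Omega^{D,\tau}$.

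For \eqref{eq:flip}: on the event $\{\gamma \text{ is unsatisfied}\}$ every summand above equals $-|w_e|$, so the Hamiltonian drops by exactly $2|w|(\gamma)$ under the flip. The ratio of Boltzmann weights therefore satisfies $\Prob_w^{D,\tau}(\sigma) = e^{-2\beta|w|(\gamma)}\,\Prob_w^{D,\tau}(F\sigma)$ for every $\sigma$ with $\gamma$ unsatisfied. Summing over such $\sigma$ and invoking the injectivity of $F$ yields the stated bound.

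For \eqref{eq:general_flip}: run the same Gibbs-ratio argument without restricting to the unsatisfied event. Whenever $\sum_{e \in \gamma} w_e \sigma_i \sigma_j \leq -c$, the flipped configuration has Hamiltonian smaller by at least $2c$, so $\Prob_w^{D,\tau}(\sigma) \leq e^{-2\beta c}\,\Prob_w^{D,\tau}(F\sigma)$. Summing over the event produces a bound of $e^{-2\beta c}$ in any finite window. The DLR condition (Definition~\ref{def:Gibbs_Boltz}(b)), applied with any $D \supseteq D_\gamma$, then transfers this finite-volume estimate to the infinite-volume $\nu$ by integrating over $(w,\sigma_{\partial D^c})$.

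The lone delicacy is to choose $D$ large enough that $D_\gamma$ sits in the interior of $D$, so that $F$ leaves $\tau$ untouched, and to verify that the resulting finite-volume bound is uniform in $D$; both are immediate from the DLR property, which makes the passage to $\nu$ routine. The rest of the argument is the algebraic computation of the Hamiltonian change under $F$ together with careful bookkeeping of which edges are dual to $\gamma$.
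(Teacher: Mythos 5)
Your proof is correct and follows essentially the same approach as the paper: both rest on the spin-flip involution $\sigma \mapsto F\sigma$ (the paper's $\omega \mapsto \bar\omega$) that reverses the sign of $w_e\sigma_i\sigma_j$ exactly on edges dual to $\gamma$, leading to the Gibbs-ratio bound $\Prob_w^{D,\tau}(\sigma)\leq e^{-2\beta c}\Prob_w^{D,\tau}(F\sigma)$, which is then summed over the event and transferred to $\nu$ via the DLR property. The paper obtains the marginally sharper $\tfrac{1}{1+e^{2\beta c}}$ by keeping both $\sigma$ and $F\sigma$ in the denominator, but the idea and structure are identical.
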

\begin{proof}
Denote by $A_{\gamma}$ the event that $w(\gamma)<-c.$
Pair each spin configuration $\omega$ with $w(\gamma)<-c$ with the configuration $\bar\omega$ obtained from $\omega$ by flipping all the spins in $D_\gamma,$ the domain bounded by $\gamma.$ Note that in $\bar\omega,~w(\gamma)>c.$
This pairing is one-to-one on its image, which is disjoint from $A_\gamma.$ Now, for any domain $D\subseteq\Z^2,$ with $D_\gamma\subseteq D,$ given any specific boundary conditions $\tau$ for $D,$ and any configuration $\omega\in\Omega^{D, \tau},$ if $w(\gamma)<0,$ then
\[
-\mathcal{H}_w^\tau(\omega) = -\mathcal{H}_w^\tau(\bar\omega) - 2\beta |w(\gamma)|.
\]
Hence
\[
\Prob_w^\tau(\omega) = \frac{\Prob_w^\tau(\bar\omega)}{e^{2\beta |w(\gamma)|}}.
\]
Denote by $A_{\gamma,D,\tau}\subseteq \Omega^{D, \tau}$ the set of spin configurations with $w(\gamma)<-c.$
For $\omega\in A_{\gamma,D,\tau},~\Prob_w^\tau(\omega)\leq \frac{\Prob_w^\tau(\bar\omega)}{e^{2\beta c}},$ therefore
\begin{align*}
\Prob_w^\tau(A_{\gamma,D,\tau}) = \sum_{\omega\in A_{\gamma,D,\tau}}\Prob_w^\tau(\omega)=
\frac{\sum_{\omega\in A_{\gamma,D,\tau}}e^{-\beta\mathcal{H}_w^\tau(\omega)}}{\sum_{\omega\in\Omega^{D,\tau}}e^{-\beta\mathcal{H}_w^\tau(\omega)}}
\\ \leq \frac{\sum_{\omega\in A_{\gamma,D,\tau}}e^{-\beta\mathcal{H}_w^\tau(\omega)}}{\sum_{\omega\in A_{\gamma,D,\tau}}e^{-\beta\mathcal{H}_w^\tau(\omega)}+e^{-\beta\mathcal{H}_w^\tau(\bar\omega)}}
\\ \leq
\frac{1}{1+e^{2\beta c}}.
\end{align*}

The second claim follows from the same argument by taking $c$ to be $|w|(\gamma).$
\end{proof}

\begin{lemma}\label{no_long_unsat_cycles}
There exists a positive $\beta_0$ such that for any inverse-temperature $\beta >\beta_0,$ and every translation invariant EA$_\beta$ spin glass measure $\nu,~\nu-$almost-surely there is only a finite number of unsatisfied cycles through any dual vertex.
%there exists $M >0$ such that for all $n > M$ there is no unsatisfied cycle of length at least $\log{n}$ in $[n]^2.$
\end{lemma}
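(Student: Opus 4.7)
The strategy is a union bound plus Borel-Cantelli, combining Observation \ref{obs:flip} with Lemma \ref{heavy_subgraphs}. Fix a dual vertex $x$. Since the marginal of $\nu$ on the interactions is $\mu$, Lemma \ref{heavy_subgraphs} applies $\nu$-a.s., providing a finite random $N=N(x)$ such that for every $n > N$ and every connected subgraph $G$ of $x+[n]^2$ of size at least $\log n$,
\[
|w|(G) \geq \lambda_2 |E(G)|.
\]
In particular, any simple dual cycle $\gamma$ of length $\ell > N$ through $x$, being a connected dual subgraph of $x+[\ell+1]^2$ of size $\ell$, satisfies $|w|(\gamma) \geq \lambda_2 \ell$.

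Next I would apply Observation \ref{obs:flip} conditionally on $w$. For any finite $D$ strictly containing $D_\gamma$, the event $\{\gamma\ \text{unsatisfied}\}$ is determined by $(w,\sigma_{\bar D})$, so the DLR property (b) of Definition \ref{def:Gibbs_Boltz} combined with the observation yields, uniformly in the boundary spins,
\[
\Prob_\nu(\gamma\ \text{unsatisfied}\mid w,\sigma_{D^c}) \leq e^{-2\beta|w|(\gamma)},
\]
and averaging out $\sigma_{D^c}$ gives $\Prob_\nu(\gamma\ \text{unsatisfied}\mid w) \leq e^{-2\beta|w|(\gamma)}$.

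Since every simple dual cycle of length $\ell$ through $x$ is in particular a connected dual subgraph of size $\ell$ containing $x$, Lemma \ref{not_2_many_connected} bounds their number by $32^\ell$. A union bound then yields, on the $\nu$-a.s.\ event provided by Lemma \ref{heavy_subgraphs} and for every $\ell > N(x)$,
\[
\Prob_\nu\bigl(\exists\ \text{unsatisfied simple cycle of length}\ \ell\ \text{through}\ x \,\big|\, w\bigr) \leq 32^\ell\, e^{-2\beta \lambda_2 \ell}.
\]
Choosing $\beta_0$ so that $2\beta_0 \lambda_2 > \log 64$ makes the right-hand side summable in $\ell$ for every $\beta > \beta_0$, and a conditional Borel-Cantelli concludes that $\nu$-a.s.\ only finitely many $\ell > N(x)$ admit such a cycle. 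The finitely many lengths $\ell \leq N(x)$ contribute only finitely many simple cycles, and a union bound over the countable set of dual vertices completes the proof.

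\textbf{Main obstacle.} The only delicate point is the measure-theoretic passage from the finite-volume bound of Observation \ref{obs:flip} to the conditional bound $\Prob_\nu(\gamma\ \text{unsatisfied}\mid w) \leq e^{-2\beta|w|(\gamma)}$ under the infinite-volume measure $\nu$; one must choose $D$ large enough that $\{\gamma\ \text{unsat}\}$ is measurable in $(w,\sigma_{\bar D})$, invoke the DLR relation, and observe that the resulting bound is independent of $D$ and $\tau$. Once this passage is in place, the rest is routine counting combined with the exponential tail supplied by Lemma \ref{heavy_subgraphs}.
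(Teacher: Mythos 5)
Your proposal is correct and follows essentially the same route as the paper: bound $|w|(\gamma)$ from below via Lemma~\ref{heavy_subgraphs}, bound the conditional probability that a given $\gamma$ is unsatisfied via Observation~\ref{obs:flip} and the DLR relation, count cycles with Lemma~\ref{not_2_many_connected}, and conclude via union bound plus Borel--Cantelli. The only cosmetic difference is bookkeeping: the paper conditions on the random threshold $\iota$ supplied by Lemma~\ref{heavy_subgraphs} and runs Borel--Cantelli within each $\{\iota=n\}$ slice, whereas you condition on $w$ and work on the a.s.\ event where the lower bound on $|w|(\gamma)$ holds; both handle the randomness of the threshold equivalently, and your identification of the DLR passage as the one delicate step matches the paper's implicit use of it.
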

\begin{proof}
Choose $\lambda_2$ as in Lemma \ref{heavy_subgraphs}, and fix a dual vertex $x$. Put $A=32,$ the base of the exponential growth of Lemma \ref{not_2_many_connected}. Let $\beta$ be an arbitrary inverse temperature. Given a dual cycle $\gamma,$ denote by $A_{\gamma}$ the event that $\gamma$ is unsatisfied. % Note that
%\begin{equation}
%\label{eq:flip}
%\Prob_\nu(A_{\gamma}) \leq \frac{e^{-\beta |w|(\gamma)}}{e^{-\beta |w|(\gamma)}+e^{\beta |w|(\gamma)}} < e^{-2\beta |w|(\gamma)}.
%\end{equation}
%Indeed, pair each spin configuration $\omega$ with $\gamma$ unsatisfied to the configuration $\bar\omega$ obtained from $\omega$ by flipping all the spins in $D_\gamma,$ the domain bounded by $\gamma.$ This pairing is one-to-one on its image. Now, for any domain $D\subseteq\Z^2,$ with $D_\gamma\subseteq D,$ given any specific boundary conditions $\tau$ for $D,$ and any configuration $\omega\in\Omega^{D, \tau}$ with $\gamma$ unsatisfied,
%\[
%-\mathcal{H}_w^\tau(\omega) = -\mathcal{H}_w^\tau(\bar\omega) + 2\beta |w|(\gamma).
%\]
%Hence
%\[
%\Prob_w^\tau(\omega) = \frac{\Prob_\nu(\bar\omega)}{e^{2\beta |w|(\gamma)}}.
%\]
%Denote by $A_{\gamma,D,\tau}\subseteq \Omega^{D, \tau}$ the set of spin configurations where $\gamma$ is unsatisfied.
%\begin{align*}
%\Prob_w^\tau(A_{\gamma,D,\tau}) = \sum_{\omega\in A_{\gamma,D,\tau}}\Prob_w^\tau(\omega)=
%\frac{\sum_{\omega\in A_{\gamma,D,\tau}}e^{-\beta\mathcal{H}_w^\tau(\omega)}}{\sum_{\omega\in\Omega^{D,\tau}}e^{-\beta\mathcal{H}_w^\tau(\omega)}}
%\\ \leq \frac{\sum_{\omega\in A_{\gamma,D,\tau}}e^{-\beta\mathcal{H}_w^\tau(\omega)}}{\sum_{\omega\in A_{\gamma,D,\tau}}e^{-\beta\mathcal{H}_w^\tau(\omega)}+e^{-\beta\mathcal{H}_w^\tau(\bar\omega)}}
%\\ \leq
%\frac{e^{-\beta |w|(\gamma)}}{e^{-\beta |w|(\gamma)}+e^{\beta |w|(\gamma)}}.
%\end{align*}
%Denote by $B_{\geq i}$ the event that there is an unsatisfied cycle of length at least $i$ passing through $x.$
Let $B_i$
be the event that there exists a dual unsatisfied cycle through $x$ of length exactly $i.$
The lemma is equivalent to proving
\begin{equation}\label{eq:bc}
\Prob_\nu(\bigcap_{j\geq 0} \bigcup_{j>i}B_{i}) = 0.
\end{equation}
Lemma \ref{heavy_subgraphs} tells us that for $i$ large enough, every cycle $\gamma$ through $x$ of length $i$ satisfies
\begin{equation}\label{eq:qtsc}
|w|(\gamma)\geq \lambda_2 i.
\end{equation}
Let $\iota$ be the smallest number such that for all $i\geq\iota$, every cycle $\gamma$ through $x$ of length $i$ satisfies \eqref{eq:qtsc}, $\Prob_\nu(\iota<\infty)=1$.

Using equation \ref{eq:flip}, for a cycle $\gamma$ of length $i$, for every $n<i,$
% the probability $\gamma$ is unsatisfied is at most
\[
\Prob_\nu(\gamma\mbox{ is unsatisfied}|\iota=n)\leq e^{-2\beta\lambda_2 i}.
\]
There are no more than $A^i$ possible cycles of length $i$ through $x,$ by Lemma \ref{not_2_many_connected}. Thus, by the union bound,
\[
\Prob_\nu(B_i|\iota=n)\leq e^{-2\beta\lambda_2 i}A^i\leq (Ae^{-2\beta\lambda_2})^i.
\]
Choose $\beta_0$ so that
\[
Ae^{-2\beta_0\lambda_2} < 1,
\]
and hence for any $\beta>\beta_0,~
~Ae^{-2\beta\lambda_2} < 1.$

Thus, for all $\beta>\beta_0$ and every $n$,
\[
\sum_{i=1}^\infty\Prob_\nu(B_i|\iota=n)<\infty
\]
and by Borel-Cantelli
\[
\Prob_\nu(\bigcap_{j\geq 0} \bigcup_{j>i}B_{i}|\iota=n)=0.
\]

\eqref{eq:bc} follows.

%\[
%\Prob_\nu(B_{\geq{i}})\leq\sum_{j=i}^\infty\Prob_\nu(B_j)
%\leq\sum_{j=i}^\infty (Ae^{-2\beta\lambda_2})^j
%= \frac{(Ae^{-2\beta\lambda_2})^i}{1-Ae^{-2\beta\lambda_2}}.
%\]
%The last term tends to $0$ as $i\to\infty,$ hence
%\[
%\Prob_\nu(\bigcap_{i\geq 0} B_{\geq i}) = 0,
%\]
%as claimed.
\end{proof}
%
%The proof of the lemma has the following corollary.
%\begin{cor}\label{cor:no_long_unsat_cycles}
%The same $\beta_0$ of Lemma \ref{no_long_unsat_cycles} guarantees that almost surely, for any constant $q\in(0,1),$ and any dual vertex $x,$ there is only a finite number of
%dual graphs $C$ which contain $x,$ such that
%\begin{enumerate}
%\item
%All the edges in $E(C)$ are unsatisfied.
%\item
%There is a subgraph $C_0\subseteq C$ with $|E(C_0)|\geq q|E(C)|$ which is a union of disjoint simple cycles.
%\end{enumerate}
%\end{cor}

From now on we fix a $\beta$ which guarantees that, $\nu-$almost surely, for any dual vertex $x,$ there is only a finite number of unsatisfied cycles passing through $x.$
\subsection{Extracting a forest}\label{sec:3}
%Denote by $\mathcal{U}$ the graph of unsatisfied dual edges. The next lemma will enable us to extract almost surely a spanning forest $\mathcal{F}$ to $\mathcal{U},$ in a translation invariant way. Thus, we have a process which produces a translation invariant measure of forests. In particular, $\mathcal{U}$ has an infinite component if and only if $\mathcal{F}$ has one.
For the following lemma, consider a measure $\rho$ of subgraphs of $\Z^2$, invariant under the translations group action, or a joint distribution of subgraphs of $\Z^2$ and interactions, invariant under the diagonal action of the translations group.
Assume, moreover, that for almost every subgraph $G$ in the support of $\rho,$ and any vertex $v\in G$ there are only finitely many simple cycles in $G$ containing $v.$
\begin{lemma}\label{lem:extract forest}
Under the above assumptions there exists a translation invariant process which extracts a spanning forest for $G,$ whose connected components are exactly the connected components of $G.$
\end{lemma}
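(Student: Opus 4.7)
The plan is to enlarge the probability space by an i.i.d.\ family $(U_e)_{e \in E(\Z^2)}$ of $\mathrm{Uniform}[0,1]$ random variables on the edges, taken independent of the given data (the subgraph $G$, and the interactions $w$ in the joint case); the resulting joint distribution remains invariant under the diagonal translation action because the $U_e$ are i.i.d. The key structural observation is that the hypothesis ``only finitely many simple cycles through each vertex'' forces every biconnected component (block) of $G$ to be almost surely finite. Indeed, if $B$ is a non-bridge block and $e_1 \in B$ has endpoints $u, v$, then by Whitney's theorem every other edge $e \in B$ lies on a simple cycle of $G$ passing through both $e_1$ and $e$; such a cycle passes through $u$, and since by hypothesis there are only finitely many simple cycles through $u$, each of finite length, $B$ is contained in a finite union of finite sets and hence is finite.

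With this reduction in hand, define $F \subseteq G$ by placing every bridge of $G$ into $F$, and, within each non-bridge block $B$ (a finite $2$-connected graph), including the unique minimum-$U$-weight spanning tree of $B$ computed by Kruskal's algorithm. Equivalently, $e \in F$ if and only if $e$ is not the maximum-$U$ edge on any simple cycle of $G$ passing through $e$; since the $U_e$ are almost surely distinct, this rule is unambiguous.

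The three required properties follow directly. $F$ is a forest, since each block contributes a tree and the block-cut tree structure prevents any cycle involving edges from distinct blocks. $F$ has the same connected components as $G$, since within each block the Kruskal spanning tree spans all vertices of that block, while bridges preserve inter-block connectivity; concatenating within-block $F$-paths and the bridges along any $G$-path yields an $F$-path between its endpoints. Finally, $F$ is obtained from $(G, U)$ by a translation-equivariant rule, because block membership of an edge is a local property of $G$ and each block's spanning tree depends only on the restriction of $U$ to the finite edge set of that block; consequently the joint law of $(G, w, F)$ is translation invariant.

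The main conceptual obstacle is to establish the finiteness of blocks; this is precisely where the finite-cycles-per-vertex hypothesis enters the argument. Once this reduction to finite $2$-connected subgraphs is accomplished, the remainder of the construction is the standard Kruskal/minimum-spanning-tree machinery applied block by block, which immediately yields the forest property, the correct component structure, and the required measurability and translation invariance.
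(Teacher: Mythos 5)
Your proof is correct, and it takes a genuinely different route from the paper. The paper runs a dynamical \emph{loop-erasing process}: it attaches an independent Poisson clock to every simple cycle of $\Z^2$ (with summable rates so that clocks interact only locally in each short time window, verified via a subcritical Galton--Watson domination), and when a clock rings it deletes a uniformly chosen edge of that cycle; translation invariance is inherited from the translation-invariant choice of rates, the limit graph is cycle-free because every surviving cycle's clock must eventually ring, and connectivity is preserved because each deletion removes one edge of an existing cycle, with the finite-cycles-per-vertex hypothesis ruling out an infinite sequence of replacement paths. Your argument instead is entirely static: you observe, via Whitney's two-edges-on-a-common-cycle characterization of $2$-connectivity, that the finite-cycles-per-vertex hypothesis forces every block of $G$ to be finite, and then you take the $U$-minimal spanning tree inside each block (plus all bridges), which is the standard ``cycle rule'' minimal spanning forest for i.i.d.\ edge labels. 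Both approaches need an auxiliary source of i.i.d.\ randomness to break symmetry in a translation-equivariant way (clock-and-chosen-edge pairs in the paper, edge labels $U_e$ for you), and both land on the same conclusion. Your reduction to finite blocks is arguably cleaner and more self-contained: it replaces the Poisson-process well-posedness argument and the Galton--Watson comparison with a single finite deterministic observation, after which everything is finite combinatorics. The paper's dynamical construction is heavier but encodes more structure (it is set up to feed the related loop dynamics discussed in the open-problems section, and yields a decreasing filtration $(G_t)$), which your one-shot MST construction does not provide; for the purposes of this lemma, however, that extra structure is unnecessary.
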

%In particular, by the above results, this process, applied to unsatisfied dual graphs, yields a translation invariant measure of spanning forests.

For the proof we first define the following \emph{loop erasing process}. For each cycle in $G$ attach a Poisson clock with rate which depends on the cycle. When the clock rings, if the cycle still exists, uniformly at random delete one edge from it. In case the cycle does not exist anymore, nothing happens. Call the resulting graph $G_\infty.$

In more formal terms, let $\Gamma$ be the collection of all simple cycles in $\Z^2.$
%Assign
To each simple cycle $\gamma$ attach a Poisson clock of rate $r_\gamma,$ to be determined soon, and a uniformly
% randomly
chosen edge $e_\gamma$ of the cycle.

Note that $\Z^2$ acts on $\Gamma$ by translations.
For a given edge $e,$ let $\Gamma_e$ be the set of all simple cycles $\gamma$ which contain it.
For $\gamma\in\Gamma$ let $l_\gamma$ be the length of $\gamma.$
Choose the rates $\{r_\gamma\}_{\gamma\in\Gamma}$ of the clocks in a translation invariant manner, under the requirement that
\begin{equation}
\label{eqn:WellDefinedness}
\sum_{\gamma\in \Gamma_e}r_{\gamma} l_{\gamma} < \infty,
\end{equation}
for some, hence every, edge $e,$ holds.
This requirement can be easily fulfilled, for example by taking $r_\gamma$ of the form $\exp\{-al_\gamma\},$ for some large enough $a.$

Fix $\theta > 0,$ small enough so that
\begin{equation}\label{eq:theta}
\theta\sum_{\gamma\in \Gamma_e}r_{\gamma} l_{\gamma} <1.
\end{equation}

Define a sequence of decreasing graphs $G_0\supseteq G_1\supseteq G_2\supseteq\ldots$ as follows. $G_0=G.$ Suppose $G_n$ has been defined, write $\Gamma_n$ for the set of simple cycles contained in it. For an edge $e\in G_n$ set $D^e_0=\emptyset.$ Let $D^e_1\subseteq\Gamma_n$ be the set of simple cycles which contain $e$ and whose clock rings during the time interval $[n\theta,(n+1)\theta].$ Note that almost surely no clock rings at any time of the form $m\theta,m\in\mathbb{N}.$ A simple cycle $\gamma$ belongs to the random subset $D^e_m\subseteq\Gamma_n$ if $\gamma$ intersects some $\gamma'\in D^e_{m-1}$ and
 %its clock
 the clock of $\gamma$
 rings in the time interval $[n\theta,(n+1)\theta].$ Let $E^e_m$ be the set of edges which belongs to cycles in $D^e_m.$ The sequence $\{|E^e_m|\}_{m\in\mathbb{N}}$ is stochastically dominated by the growth of a Galton-Watson tree, with expected number of offsprings being $\theta\sum_{\gamma\in \Gamma_e}r_{\gamma} l_{\gamma}.$ As this number is smaller than $1,$ the tree is subcritical, and the processes $\{E^e_m\},\{D^e_m\}$ are almost surely finite.
Write \[E^e=\bigcup_{m\geq 1}E^e_m,~~D^e=\bigcup_{m\geq 1}D^e_m.\]
Note that $E^e=E^f,D^e=D^f$ for any $f\in E^e.$

Almost surely no two clocks ring at the same time, thus we can order the cycles in $D^e$ according to the times their clocks ring, $\gamma_1,\ldots,\gamma_r,$ where the clock of $\gamma_i$ rings at time $t_i\in[n\theta,(n+1)\theta],$ and $t_i>t_{i-1}.$

Now, define the sets $H^e_i\subseteq E^e,$ for $i=1,\ldots,r$ by $H^e_1=e_{\gamma_1},$ where $e_\gamma$ is the randomly chosen edge of $\gamma,$ as above. $H^e_{i+1}=H^e_{i}$ if $\gamma_{i+1}\cap H^e_i\neq\emptyset.$ Otherwise $H^e_{i+1}=H^e_{i}\cup e_{\gamma_{i+1}}.$
Write $H^e=H_r^e.$

An edge $e\in E(G_n)$ belongs to $E(G_{n+1})$ if and only if $e\notin H^e.$ Define $G_\infty=\bigcap_{n\geq 0} G_n.$
\begin{rmk}
It can be shown, although it is not needed for this paper, that the resulting graph $G_\infty$ is independent of the choice of $\theta,$ as long as \eqref{eq:theta} is satisfied. Moreover, with the same method one can actually construct a random decreasing family of graphs $(G_t)_{t\in[0,\infty)}$ with the property that
for all $t,$ an edge $e$ belongs to $\lim_{s\nearrow t} G_s\setminus G_t$ if and only if the following two requirements hold.
\begin{enumerate}
\item
There exists a cycle $\gamma$ in $\lim_{s\nearrow t} G_s,$ whose clock rang at time $t.$
\item
$e=e_{\gamma}.$
\end{enumerate}
It can then be proved that $\lim_{t\to\infty} G_t=G_\infty.$
\end{rmk}
\begin{prop}\label{prop:no_cycles_extracted}
The distribution of the random graph $G_\infty$ is translation invariant. Almost surely $G_\infty$ contains no cycles.
\end{prop}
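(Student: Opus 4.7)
I would treat the two assertions separately.

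For translation invariance, the idea is that the whole construction is $\Z^2$-equivariant. The law of $G$ is translation invariant by hypothesis; the set $\Gamma$ of simple cycles carries a natural $\Z^2$-action, and the rates $r_\gamma$ were chosen translation invariantly, so the random field $(T_\gamma,e_\gamma)_{\gamma\in\Gamma}$ of Poisson clocks together with uniformly selected edges is translation invariant. Each passage $G_n\mapsto G_{n+1}$ is defined purely in terms of cycle-intersection patterns, orderings of clocks ringing in the slot $[n\theta,(n+1)\theta]$, and membership in the sets $H^e$, all of which commute with the $\Z^2$-action. Since intersecting over $n$ also commutes with translations, the law of $G_\infty=\bigcap_{n\geq 0}G_n$ is translation invariant.

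For acyclicity, the strategy is a ``one break per ring'' observation combined with a countable union bound. Fix an arbitrary simple cycle $\gamma\subseteq\Z^2$. Since $r_\gamma>0$, the first ringing time $T_\gamma$ of its clock is almost surely finite; let $n_\gamma$ be the unique integer with $T_\gamma\in[n_\gamma\theta,(n_\gamma+1)\theta]$. I claim that almost surely at least one edge of $\gamma$ is removed at step $n_\gamma$. If $\gamma\not\subseteq G_{n_\gamma}$, then $\gamma$ was already broken before step $n_\gamma$ and there is nothing to prove. Otherwise $\gamma\subseteq G_{n_\gamma}$; fix any $e\in\gamma$. Then $\gamma\in D^e_1\subseteq D^e$ since $e\in\gamma$ and $\gamma$ rings in the slot. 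Order $D^e$ by ringing time as $\gamma_1,\dots,\gamma_r$ and let $k$ be the index with $\gamma=\gamma_k$. By the recursive definition of the $H^e_i$'s, either $\gamma\cap H^e_{k-1}\neq\emptyset$, so $\gamma\cap H^e\neq\emptyset$, or $e_\gamma$ is adjoined at step $k$, so $e_\gamma\in H^e\cap\gamma$. In both cases at least one edge of $\gamma$ fails to belong to $G_{n_\gamma+1}$, whence $\gamma\not\subseteq G_\infty$. A union bound over the countable collection of simple cycles of $\Z^2$ then yields that $G_\infty$ contains no simple cycle, and therefore no cycle at all.

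The principal subtlety, already addressed during the construction, is making the processing within a single slot meaningful despite the infinitely many simple cycles through any fixed edge. This is where the subcritical Galton--Watson estimate inherited from \eqref{eq:theta} is essential: it ensures that each set $D^e$ is almost surely finite, so that ordering by ringing times and the inductive definition of $H^e_i$ are well defined, and that the sum in \eqref{eqn:WellDefinedness} controls the finiteness of all the relevant objects. Beyond this point, the argument reduces to the elementary observation that a cycle cannot survive the slot in which its own clock rings.
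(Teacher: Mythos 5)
Your proof is correct and follows essentially the same approach as the paper, which argues translation invariance by equivariance of the construction and acyclicity by the fact that every simple cycle's clock rings at least once, forcing an edge removal via the $H^e$ mechanism. Your version is actually more careful than the paper's one-line argument: you explicitly handle the case where $\gamma$ was already broken before its first ring, and you spell out (via the $D^e=D^f$ and $H^e=H^f$ identification for edges $f$ in the same cluster $E^e$) why an edge of $\gamma$ landing in $H^e$ is indeed removed from $G_{n_\gamma+1}$.
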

\begin{proof}
The first statement is obvious by construction. Regarding the second, note that almost surely the clock of any simple cycle $\gamma$ should ring at least once, hence in at least one time interval $[n\theta,(n+1)\theta].$ By construction, at least one edge of $\gamma$ will not appear in $G_{n+1}.$
\end{proof}

\begin{proof}[Proof of Lemma \ref{lem:extract forest}]
Apply the loop erasing process defined above, with a translation invariant choice of rates which satisfies \eqref{eqn:WellDefinedness}. By Proposition \ref{prop:no_cycles_extracted} the resulting graph $G_\infty$ is almost surely a forest.
%Use Proposition \ref{prop:well_def} to find translation invariant rates which make the loop erasing process well defined.
%Clearly, during the loop erasing process, any cycle disappears after some finite time, and the process is translation-invariant by the choice of rates.
In addition, if $u,v$ are connected in $G$, they stay connected in any $G_n,$ for finite $n.$ Indeed, no edge removal in the process changes connectivity properties, as only one edge which lies on a cycle is eliminated at each step. It only remains to prove that $u,v$ are still connected in the infinite time limit.

The only way $u,v$ can be connected in any finite time, but not in the limit, is if for any $n,$ and any path $P$ between $u$ and $v$ in $G_n,$ there exists $m>n$ such that $P$ is not contained in $G_m.$ %We want to show this cannot happen.

It is enough to show that this scenario cannot happen for neighboring $u,v.$ Indeed, if any neighboring $u,v$ remain in the same connected component of $G_\infty,$ the same will hold for any $u,v$ which are connected in $G.$

Let $P_0$ be the path made of the single edge $e$ between $u, v$ in $G.$ Let $P_1$ be a simple path connecting $u,v$ after $e$ has been removed. $P_2$ be a simple path between $u$ and $v$ which still connects them after an edge of $P_1$ has been removed etc. Consider the simple paths $P_1,P_2,\ldots.$ They are all distinct and do not contain $e.$ Thus, the cycle $P_m\cup\{e\}$ are all simple and distinct.
But this contradicts our assumption that any vertex is contained in a finite number of simple cycles in $G,$ and the lemma follows.
\end{proof}

\subsection{Translation invariant measure of lattice forests}\label{sec:4}
Our next goal is to analyze the structure of lattice forests which belong to the support of a translation invariant measure.%show that there is no translation invariant measure of forests, all of whose edges are unsatisfied, which contains an infinite tree component.
%We show that no such measure exists in steps, each step rules out another type of trees.
%\begin{definition}
%An infinite tree is called $3-$infinite if there is a node in the tree whose removal disconnects the tree into at least three infinite trees, and possibly some finite components. An infinite tree which is not $3-$infinite is called $2-$infinite if it includes a two directional infinite simple path. Infinite trees which are neither $3-$infinite nor $2-$infinite are called $1-$infinite.
%\end{definition}
%\begin{rmk}
%It follows from the famous K$\ddot{o}$nig's theorem that any infinite tree contains an infinite simple path, but it may be infinite only to one side.
%\end{rmk}
\begin{definition}\label{def:trees}
Let $T$ be an infinite tree of bounded degree. It is \emph{single-infinite} if it does not contain two disjoint one-way infinite simple paths. It is \emph{bi-infinite} if it has two disjoint one-way-infinite paths, but not three. Otherwise it is \emph{multi-infinite}.

In a single-infinite tree, any vertex $v$ has a single one-way infinite path which starts at $v$. If $u,v$ belong to the same single-infinite tree, we say that $u$ is \emph{behind} $v$ if when we erase $v$ from the tree the component which contains $u$ is finite.%We denote by $R(v)$, and call it the $\emph{roots}$ of $v$, the unique connected component of $T\setminus S(v)$ which contains $v$. Note that if $u$ is a vertex in $R(v)$ then $v$ is a vertex in $S(u)$.

In a bi-infinite tree, there is a single two-way infinite path. It is called the \emph{path} of the tree, and is denoted by $P(T)$, or simply $P,$ if the tree is understood from context.

In a multi-infinite tree, a vertex $v$ with the property that $T\setminus \{v\}$ has at least three infinite components is called an \emph{encounter point}.

A forest all of whose components are single-infinite (bi-infinite) trees is called \emph{single infinite (bi-infinite).}

%We denote by $\delta_T(x,y)$ the tree distance between the vertices $x,y$. We shall sometimes write only $\delta$, if it is clear to which tree we are referring. The graph distance function will be denoted by $d(\cdot,\cdot)$
\end{definition}
\begin{rmk}\label{rmk:rmk forests}
It follows from K\"{o}nig's Lemma \cite{K}, that any infinite tree of bounded degree contains at least one-way infinite path. If it has no two such disjoint paths, then unless the tree itself is a one-way infinite path (which is impossible in the translation invariant context), it has no unique or canonical one-way infinite path.

In a multi-infinite tree, an encounter point always exists. Moreover, it is easy to see, following \cite{BK}, that the number of encounter points in some finite domain in the graph is never higher than the maximal number of disjoint one-way infinite paths which intersect the boundary of the domain (although these paths can be decomposed in more than one way, the maximal number of paths is finite and well defined).
%If the tree $T$ is bi-infinite is has exactly one bi-infinite path. We call it the \emph{path of the tree} and denote it by $P(T).$
%
%A multi-infinite tree always has at least one encounter-point. Moreover, it is easy to see, following \cite{burton_keane}, that the number of encounter points in some finite domain in the graph is never higher than the maximal number of disjoint one-way infinite paths which intersect the boundary of the domain (although we can decompose differently these paths, the maximal number of paths is finite and well defined).
\end{rmk}

%The following lemmas concerning the structure of translation invariant measures of lattice forests appeared in \cite{BT}. 
The next beautiful well known lemma is due to Burton and Keane \cite{BK}.
\begin{lemma}\label{lem:3_inf}
Let $\eta$ be a measure of planar lattice forests all of whose connected components are infinite.
Suppose $\eta$ is invariant under the translation groups's action. Then $\eta-$almost always, any tree component of the forest is either single-infinite or bi-infinite.
\end{lemma}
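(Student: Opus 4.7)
My plan is to run the classical Burton--Keane trifurcation argument in the forest setting. The strategy is to assume for contradiction that a positive fraction of tree components are multi-infinite, then count encounter points inside a large box $[n]^2$ in two ways, producing an incompatibility between an $n^2$ lower bound coming from translation invariance and an $O(n)$ upper bound coming from the boundary of the box.

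First, by Remark \ref{rmk:rmk forests} every multi-infinite tree has at least one encounter point. So if multi-infinite components have positive $\eta$-probability, then by translation invariance $p := \Prob_\eta(0 \text{ is an encounter point of its tree}) > 0$. Writing $N_n$ for the number of encounter points in $[n]^2$, this gives the lower bound $\E_\eta[N_n] = p\, n^2$.

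Next I will establish the deterministic inequality $N_n \leq \tfrac{1}{3}|\partial [n]^2|$ valid on every realization in the support of $\eta$. At each encounter point $v \in [n]^2$, choose three pairwise-disjoint one-way-infinite simple paths in its tree, and truncate each at its first exit from $[n]^2$; record the triple $\phi(v) \subset \partial [n]^2$ of exit vertices. The key combinatorial claim, proved by induction on the number of encounter points in $[n]^2$, is that the path selections can be made so that the triples $\phi(v)$ for distinct encounter points are pairwise disjoint subsets of $\partial[n]^2$. The inductive step uses that any two encounter points $v,v'$ in the same tree are joined by a unique tree-path, so at most one of the three infinite branches at $v$ (respectively at $v'$) lies ``towards'' the other encounter point; the remaining branches at each vertex can then be routed to distinct boundary vertices, following the standard argument of \cite{BK}. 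Consequently $3 N_n \leq |\partial[n]^2|$.

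Combining with the lower bound gives $p\, n^2 = \E_\eta[N_n] \leq \tfrac{1}{3}|\partial[n]^2| = O(n)$, which is impossible for large $n$; hence $p = 0$ and multi-infinite components occur with $\eta$-probability zero. The main obstacle is the combinatorial disjointness of the exit-triples: paths from two different encounter points can a priori share an arbitrarily long portion of the common tree, so one cannot simply pick three infinite paths independently at each $v$. The resolution, and the heart of the Burton--Keane argument, is to process the encounter points one at a time and use the tree structure to reroute along branches pointing ``away'' from all previously-chosen paths, yielding a genuine injection of encounter points into disjoint triples of $\partial[n]^2$.
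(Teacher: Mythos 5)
Your overall strategy (Burton--Keane trifurcation count in a box, linear upper bound versus $\rho\,n^2$ lower bound from translation invariance) is exactly the paper's approach, which cites \cite{BK} via Remark~\ref{rmk:rmk forests} for the combinatorial bound on the number of encounter points. However, the precise combinatorial claim you formulate is false. You assert that the three infinite branches at each encounter point in $[n]^2$ can be routed so that the resulting $3N_n$ boundary exit vertices are all distinct, giving $3N_n\le|\partial[n]^2|$. This cannot hold: consider a tree $T$ consisting of a horizontal segment from $v_1=(0,0)$ to $v_2=(10,0)$ together with the four vertical rays $\{(0,\pm k):k\ge1\}$ and $\{(10,\pm k):k\ge1\}$. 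Both $v_1$ and $v_2$ are encounter points, yet $T$ has only four ends, so any collection of pairwise disjoint one-way infinite paths in $T$ has size at most $4<6$. Your inductive step actually exhibits the problem: after removing the unique branch at $v$ ``towards'' $v'$ and the one at $v'$ ``towards'' $v$, only \emph{two} branches remain free at each vertex, and the two ``towards'' branches are not two rays but portions of a single tree path, which after passing the other vertex must consume one of its free branches. So the induction as stated breaks at the very first step.

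The correct statement (and the one the paper invokes) is that the number of encounter points in a box is bounded by the number of pairwise disjoint rays of the forest meeting $\partial[n]^2$, which is at most $|\partial[n]^2|$; equivalently, a finite tree embedded in $[n]^2$ with all leaves on the boundary has at least $k+2$ leaves if it has $k$ vertices of degree $\ge3$, so encounter points inject into boundary points with slack $2$ per tree component, not factor $3$. This linear bound $N_n\le|\partial[n]^2|=O(n)$ is all that is needed: together with $\E_\eta[N_n]=\rho n^2$ it still forces $\rho=0$. So the error is localized to a wrong constant in the combinatorial lemma and does not change the route of the argument, but as written the key injection step would fail and must be replaced by the standard leaf-counting (or partition) version of Burton--Keane.
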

\begin{proof}
%By ergodic decomposition we may assume that $\eta$ is ergodic.
Suppose there are some multi infinite components, in that case there are encounter points as well.
Consider a large square $F=[N]^2\subset\Z^2.$ Following the last part of Remark \ref{rmk:rmk forests}, the number of encounter points inside $F$ cannot be larger than the number of boundary edges of $F.$ Write $Encounter(F)$ for the number of encounter points in $F.$ Thus,
\[
\E_\eta(Encounter(F))\leq |E(\partial F)|=4N-4.
\]

On the other hand, if the probability for having encounter points is positive, then
%with positive probability there are encounter points,
there exists a positive number $\rho$ such that an arbitrary vertex $v$ is an encounter point with probability $\rho.$ The linearity of expectation then yields
\[
\E_\eta(Encounter(F))=\rho |F|.
\]
Thus
\[
| \partial F| \geq \rho |F|=\rho N^2.
\]
But then
\[
\frac{4N-4}{N^2} \geq \rho.
\]
Since this inequality must hold for all $N,~\rho$ must vanish.
\end{proof}

Let $F$ be a lattice forest, and fix $N\in\N.$ A \emph{bridge} is a simple path in $F\cap[N]^2$ whose endpoints belong to $\partial[N]^2.$
\begin{lemma}\label{lem:1_inf_trees}
Let $\eta$ be a translation-invariant measure of single-infinite lattice forests.
Then there exists a constant $C>0$ which satisfies the following. For all $N,$ the $\eta-$expected number of edges (vertices) of the forest which lay on bridges is at least $CN\log{N}$. Moreover, $C$ can be taken to be in the form $C'p_\eta$ where $C'$ is a universal constant and $p_\eta$ is the probability an edge (vertex) belongs to the forest.
\end{lemma}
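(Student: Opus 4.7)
Plan: I would first recast the set of edges on bridges in $[N]^2$ in terms of the one-way infinite rays guaranteed by the single-infinite assumption. For each $v \in V_\partial := \partial [N]^2 \cap F$, let $R(v)$ denote the unique one-way infinite ray emanating from $v$, and let $\gamma_N(v)$ denote its initial segment that remains inside $[N]^2$, a simple path from $v$ to an ``exit'' vertex $x_v\in V_\partial$. A short tree argument gives
\[
\{e \in F\cap[N]^2 : e\ \text{lies on a bridge}\} \;=\; \bigcup_{v\in V_\partial}\gamma_N(v):
\]
removing any $e\in F\cap[N]^2$ splits its tree component into a ``forward'' piece (which always reaches $\partial[N]^2$ via the ray through $e$'s forward endpoint) and a ``backward'' piece, and the backward piece reaches $\partial[N]^2$ inside $[N]^2$ precisely when some boundary forest vertex is a descendant of $e$ whose tree-path to $e$ lies in the box, i.e.\ when $e\in\gamma_N(v)$ for some $v\in V_\partial$.

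To lower-bound $\E_\eta\bigl[\bigl|\bigcup_{v\in V_\partial}\gamma_N(v)\bigr|\bigr]$ by $Cp_\eta N\log N$, I would use a dyadic multi-scale decomposition. Let $B_k$ be the centered sub-box of $[N]^2$ of side $\lfloor N/2^k\rfloor$ and $A_k := B_{k-1}\setminus B_k$ for $k=1,\ldots,\lfloor\log_2 N\rfloor$, so the $A_k$ are disjoint concentric annuli in $[N]^2$. By translation invariance, $\E_\eta[|\partial B_k \cap F|]\gtrsim p_\eta N/2^k$, and each such forest vertex on $\partial B_k$ has its infinite ray crossing every outer annulus $A_j$, $j\leq k$, on its way to $\partial[N]^2$. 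Inside $A_k$ itself, any such crossing uses $\gtrsim N/2^{k+1}$ edges (the width of $A_k$). I would then argue that a positive fraction of the $\sim p_\eta N/2^k$ rays crossing $A_k$ do so along \emph{distinct} edges, and that each such edge is captured by $\gamma_N(v)$ for some outer-boundary $v \in V_\partial$ (namely the first ancestor of that ray which lies on $\partial[N]^2$). This yields $\gtrsim p_\eta N$ distinct bridge edges per annulus, and summing over the $\Theta(\log N)$ disjoint annuli gives $\E_\eta[|\bigcup_v\gamma_N(v)|]\geq Cp_\eta N\log N$ with $C = C' p_\eta$ and $C'$ universal by linearity in $p_\eta$.

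The main obstacle is to execute this per-annulus count: rays from $\partial B_k \cap F$ may merge before crossing $A_k$, and one must control how much this merging collapses the edge count, while also verifying that the edges in $A_k$ lie in rays of \emph{boundary} vertices of $[N]^2$ (not merely of inner-box vertices). I would combine planarity of the single-infinite forest in $\Z^2$ (rays cannot cross and so form a planar Y-shaped structure directed outward) with a Burton--Keane-type rigidity in the spirit of Lemma~\ref{lem:3_inf} to bound the density of merging events per unit area. A mass-transport argument, sending unit mass from each $\partial B_k \cap F$ vertex along its ray up to first exit from $A_k$, then translates the no-trifurcation input into a lower bound on the number of distinct edges inside $A_k$ that feed into outer-boundary rays. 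The analogous statement for vertices on bridges follows with the same constants up to an additive $O(N)$ correction, since a path of $\ell$ edges carries $\ell+1$ vertices.
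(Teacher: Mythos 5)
Your plan identifies some useful structural observations (bridge edges as initial ray segments, the role of mass transport), but it misses the key input that the paper's proof rests on and, as sketched, the quantitative count does not close.

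The paper's argument is short once one imports a theorem from Bramson--Zeitouni--Zerner \cite{BZZ}: for a translation-invariant single-infinite forest, the probability that a forest vertex $v$ has a vertex $u$ \emph{behind} it at $\ell_\infty$ distance exactly $n$ is at least $C_0/n$ for a universal $C_0$. Given this, the $N\log N$ bound falls out immediately: for a vertex $v$ at $\ell_\infty$ distance $n$ from $\partial[N]^2$, the event ``$v$ has a descendant at distance $n$ in the direction of the nearest boundary side'' forces $v$ onto a bridge, and summing $\sim N$ such vertices per distance $n$ against $\sum_{n\le N/3} C_0/n \sim \log N$ gives the claim, with the constant linear in $\Prob_\eta(A_v)=p_\eta$. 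Your proposal never invokes this ``backward-reach has a heavy tail'' bound, and I do not see how your ingredients recover it.

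More concretely, your per-annulus count is not justified and I believe cannot be made to give $\gtrsim p_\eta N$ edges per annulus. The rays entering $A_k$ from $\partial B_k$ can merge essentially as fast as planarity permits: $m\sim p_\eta N/2^k$ rays starting on an inner boundary of perimeter $\sim N/2^k$ can all coalesce within $O(1)$ steps and then cross the annulus of width $\sim N/2^{k+1}$ as a single ray, using only $\sim N/2^k$ edges in total in $A_k$ (a Steiner-tree-type lower bound, not $p_\eta N$). Summing $\sim N/2^k$ over $k$ gives $O(N)$, a full $\log N$ factor short. Burton--Keane does not rescue this: a merge of two rays is a degree-$\ge 3$ vertex, not an encounter point (it does not create three \emph{infinite} branches), so the no-trifurcation input gives no control on merging density. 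The acknowledged ``main obstacle'' is therefore exactly where the argument breaks, and the proposed mass-transport patch (unit mass sent from $\partial B_k\cap F$ to the exit edge of $A_k$) does not obviously yield a lower bound on distinct edges rather than a conservation identity. To repair the argument you would essentially have to re-derive the BZZ tail bound, which is the real content of the lemma.

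One smaller note: your identity $\{\text{bridge edges}\}=\bigcup_{v\in V_\partial}\gamma_N(v)$ is essentially correct and could be made rigorous by noting that in a one-ended tree the path between $u,v$ is the concatenation of initial segments of their rays up to the merge vertex $m$, and that $m$ must lie in $[N]^2$ if the path does; but that identity is not used in the paper's proof, which works instead with vertices at a fixed $\ell_\infty$ distance $n$ from their closest side of $\partial[N]^2$ and the directional events $A^{L,R,U,D}_{v,n}$.
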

\begin{proof}
%It is equivalent to prove the theorem for vertices of the forest rather than edges.
We prove for vertices, the proof for edges is similar. Fix $N,$ denote by $V_N$ the (random) set of vertices which lay on bridges.

We use the following theorem, which is based on the mass transport principle, and was first proven in \cite{BZZ}. We state it using our conventions. %A proof can also be found in \cite{BT}.
\begin{thm}
For $n\geq 0,$ and $v\in(\Z^2)^*,$ denote by $A_{v,n}$ the event that
\begin{enumerate}
\item
$v$ belongs to a tree component $T$ of the forest.
\item
There exists $u\in T$ such that $u$ is behind $v$ in $T$ and
\[
|u-v|_\infty = n.
\]
\end{enumerate}
Similarly, denote by $A_v$ the event that $v$ belongs to a tree component $T$ of the forest.
There exists a universal constant $C_0,$ which does not depend on $\eta$ such that for all $n,$ and every vertex $v$ of the lattice
\[
\Prob_\eta(A_{v,n}|A_v)\geq\frac{C_0}{n}.
\]
\end{thm}
We use the notations of the last theorem. Set $p_n = \Prob_\eta(A_{v,n}).$ By translation invariance this probability does not depend on $v.$
%Using ergodic decomposition we may assume $\tau$ is ergodic.
Thus $\Prob_\eta(A_v)$ is some well defined constant which does not depend on $v.$ Hence
\[
p_n\geq\frac{D}{n},
\]
where the constant $D$ is $C_0\Prob_\eta(A_v).$
%depends linearly on $\Prob_\eta(A_v).$

Denote by $A_{v,n}^{L}~(A_{v,n}^{R})$ the event
\begin{enumerate}
\item
$v$ belongs to a tree component $T$ of the forest.
\item
There exists $u\in T$ such that $u$ is behind $v$ in $T$ and \[
|u-v|_\infty = n.
\]
\item
The $x$ coordinate of $u-v$ is $n~(-n).$
\end{enumerate}
Similarly, denote by $A_{v,n}^{U},A_{v,n}^{D},$ the similarly defined events, only with last requirement being that the $y$ coordinate of $u-v$ is $n$ or $-n,$ respectively.
Write $p_n^\alpha$ for the probability of $A_{v,n}^\alpha,~\alpha\in\{L,R,U,D\}.$
By the union bound
\[
p_n^R+p_n^L+p_n^U+p_n^D \geq p_n.
\]

For any $n,$ let $J^R_n\subseteq [N]^2$ denote the following set of vertices:
$v\in J^R_n$ if the boundary vertex $u\in\partial[N]^2$ closest to $v$ in $l_\infty$ norm is unique, located on the right vertical boundary line of the square, and $|u-v|_\infty=|u-v|_1 =n.$ Similarly define $J^L_n, J^U_n, J^D_n.$ The sets $\{J^\alpha_m\}$ for different $\alpha\in\{L,R,U,D\}, 1\leq m <N/2$ are disjoint. Write
\[
J_n = |J_n^R| = |J^L_n|=|J^U_n|=|J^D_n|=N-2n-2.
\]
For $v\in J_n^\alpha,$ if $A^\alpha_{v,n}$ holds, then in particular $v\in V_N.$ Thus,
\[
 p_n^\alpha\leq \Prob_\eta(v\in V_N),
\]
therefore, by linearity of expectation,
\begin{equation}\label{ineq:nlogn}
\sum_{n=1}^{\frac{N}{3}}J_n(p_n^R+p_n^L+p_n^U+p_n^D)\leq \E_\eta(|V_N|).
\end{equation}
%
%
%a vertex contributes to $V_N$ if and only if there is some vertex in the same tree component which belongs both to the tree and to the boundary of the square.
%We have
%\[
%\bigcup_{n\in[1,\frac{N}{3}]}J_n^R\cup J^L_n\cup J^U_n\cup J^D_n\subseteq V_N,
%\]
%hence inequality \ref{ineq:nlogn}.
Combining inequality \ref{ineq:nlogn} with the above estimates gives
\[
\E_\eta(|V_N|)\geq\sum_{n=1}^{\frac{N}{3}}J_n p_n\geq\sum_{n=1}^{\frac{N}{3}}J_n\frac{D}{n}
\\\geq\sum_{n=1}^{\frac{N}{3}}|N-2n-2|\frac{D}{n}\geq C(N\log{N}),
\]
for some constant $C>0$ which depends linearly on $\Prob_\eta(A_v).$
\end{proof}

\subsection{Proof of the main theorem}\label{sec:5}
We are now equipped with enough tools to prove the Theorem \ref{thm:main_thm}.
\begin{proof}[Proof of Theorem \ref{thm:main_thm}.]
Let $\beta_0$ an inverse temperature for which the statement of Lemma \ref{no_long_unsat_cycles} holds.
Let $\beta^*\geq \beta_0$ be an inverse temperature to be determined later.
Fix $\beta>\beta^*,$ and suppose to the contrary that with some positive probability there is an infinite component of unsatisfied dual edges.
%A standard ergodic decomposition argument shows that the ergodic components of a translation invariant EA$_\beta$ spin glass distribution are themselves translation invariant EA$_\beta$ spin glass distributions. Therefore we shall assume that our EA$_\beta$ spin glass distribution is ergodic.
~Using Lemma \ref{lem:extract forest} we can extract from $\nu$ a translation invariant measure $\eta$ of unsatisfied dual forests which have infinite components with some fixed positive probability. By Lemma \ref{lem:3_inf}, almost surely all the infinite components are either bi-infinite or single-infinite.

In case there is a positive probability for bi-infinite components, let the random set $U$ be their union, and $P=P(U)$ be the union of the paths of the bi-infinite tree components, where the paths are defined in Definition \ref{def:trees}. By translation invariance there should be a positive number $\rho_0$ with
\[\Prob_\eta(e\in P)=\rho_0.\]
Otherwise with probability $1$ the paths are empty, hence $U$ itself must be empty.

Write \[P_N = P\cap[N]^2, ~E_N=|E(P_N)|, ~Y_N=|w|(P_N).\]
Then $\E_\eta E_N=\rho_0N^2,$ and $ E_N\leq N^2$ always, hence \[\Prob_\eta( E_N>\frac{\rho_0}{2}N^2)\geq \frac{\rho_0/2}{1-\rho_0/2}.\] Put $\rho=\frac{\rho_0/2}{1-\rho_0/2}.$

Standard large deviations techniques show the existence of a constant $a>0$ such that
\begin{equation}\label{eq:LDE}
\Prob_\mu(|w|(\partial[N]^2)\geq aN)<e^{-N}.
\end{equation}

%By choosing $b=\lambda_2\rho_0/3,$ where $\lambda_2$ is the constant of Remark \ref{rmk:heavy_sub}, we have
We have
%
%\begin{align*}
%\Prob_\eta(\{Y_N&\geq bN^2\}\cap\{|w|(\partial[N]^2)\leq aN)\})\geq \\
%&\geq\Prob_\eta(\{Y_N\geq bN^2\}\cap\{ E_N\geq \frac{\rho_0}{2}N^2\}\cap|w|(\partial[N]^2)\leq aN))\geq\\
%&\geq\Prob_\eta( E_N\geq \frac{\rho_0}{2}N^2)-\Prob_\eta(|w|(\partial[N]^2)> aN))-\\
%&-\Prob_\eta(\{Y_N< bN^2\}\cap\{ E_N\geq \frac{\rho_0}{2}N^2\}\cap\{|w|(\partial[N]^2)\leq aN)\})\geq\\
%&\quad\quad\quad\quad\quad\quad\quad\quad\quad\geq\rho-e^{-N}-\Prob_\eta(\{ E_N\geq \frac{\rho_0}{2}N^2\}~\cap Q_N),
%\end{align*}
\begin{align*}
\Prob_\eta(\{ E_N\geq &\frac{\rho_0}{2}N^2\}\cap\{Y_N\geq \frac{2\lambda_2}{3} E_N\}\cap\{|w|(\partial[N]^2)\leq aN)\})\geq \\
&\geq\Prob_\eta( E_N\geq \frac{\rho_0}{2}N^2)-\Prob_\eta(|w|(\partial[N]^2)> aN))-\\
&-\Prob_\eta(\{ E_N\geq \frac{\rho_0}{2}N^2\}\cap\{Y_N< \frac{2\lambda_2}{3} E_N\}\cap\{|w|(\partial[N]^2)\leq aN)\})\geq\\
&\quad\quad\quad\quad\quad\quad\quad\quad\quad\quad\quad\quad\quad\quad\quad\quad\quad\quad\quad\geq\rho-e^{-N}-\Prob_\eta(Q_N),
\end{align*}
where $Q_N$ is the event $\{\exists G\in CG(N,\geq \log N)~|w|(G)< \lambda_2 |E(G)|\}.$
The last inequality holds for $N$ large enough by estimate \ref{eq:LDE}, and the fact that the event $\{Y_N< \frac{2\lambda_2}{3} E_N\}\cap\{ E_N\geq \frac{\rho_0}{2}N^2\}\cap\{|w|(\partial[N]^2)\leq aN)\}$ is contained in $Q_N,$ for large $N.$ Indeed, for large $N,$
\[G=P_N\cup\partial[N]^2\in CG(N,\geq \log N),~~|w|(G)< \lambda_2 |E(G)|.\] Now
\begin{align*}
&\Prob_\eta(\exists G\in CG(N,\geq \log N)~\text{s.t.}~|w|(G)< \lambda_2 |E(G)|)=\\
&=\Prob_\mu(\exists G\in CG(N,\geq \log N)~\text{s.t.}~|w|(G)< \lambda_2 |E(G)|)<N^{-4},
\end{align*}
where the last inequality uses Corollary \ref{rmk:heavy_sub}.

Putting together we get
\begin{align}\label{eq:rho_N}
\Prob_\eta(\{ E_N\geq &\frac{\rho_0}{2}N^2\}\cap\{Y_N\geq \frac{2\lambda_2}{3} E_N\}\cap\{|w|(\partial[N]^2)\leq aN)\})>\\
\notag& \quad\quad\quad\quad\quad\quad\quad\quad\quad\quad\quad\quad\quad\quad\quad>\rho-N^{-4}-e^{-N}.
\end{align}

%In addition, repeating the argument of the lower bound in the proof of Lemma \ref{heavy_subgraphs}
%There are constants $a,b > 0$ with the following property. With probability $1$ there exists $N_0$ such that for all $N\geq N_0$ it holds:
%\begin{enumerate}
%\item
%$|w|(\partial[N]^2)\leq aN.$
%\item
%$|E(P_N)|\geq(\rho_0/2)N^{2}.$
%\item
%$|w|(P_N)\geq bN^{2}.$
%\end{enumerate}
%
%Indeed, the first item follows from the upper bound of Lemma \ref{heavy_subgraphs}, the second from the definition of density. The last follows from the first two, together with the lower bound of Lemma \ref{heavy_subgraphs} applied to the connected graph $P_N\cup\partial[N]^2.$
%Choose $N_0$ to be large enough so that $aN-2bN^{2}<-bN^2,$ for all $N\geq N_0.$

But now note that $P_N,$ if nonempty, divides the square into disjoint regions $\{R_i\}_{i=1}^l,$ that is, the connected components of \[\left([0,N]^2\setminus P_N\right)\cap\Z^2.\]
Each edge of $P_N$ belongs to boundaries of two different regions. The edges of $\partial[N]^2$ separate the regions from $\Z^2\setminus[N]^2,$ each appears as the boundary edge of a single region.

Because different paths in $P$ do not intersect, the regions can be colored in two colors, black and white, so that neighboring regions have different colors.
Let $W$ denote the collection of indices of white regions, and $B$ be the collection of indices of black regions.
In addition, put
\[
v(W)=\sum_{i\in W}\sum_{e\in E(R_i,R_i^c)}v(e),~~v(B)=\sum_{i\in B}\sum_{e\in E(R_i,R_i^c)}v(e).
\]
%By the definition of the constants $a,b,$ and the fact that all the edges of $P$ are unsatisfied, we have
Since all the edges of $P$ are unsatisfied, we have
\[
v(W)+v(B)=w(\partial [N]^2)-2Y_N.%\leq -bN^2.
\]
Thus, we may assume, without loss of generality that
\[
v(W)\leq |w|(\partial [N]^2)/2-Y_N.
\]
This means that flipping the spins in the white regions decreases the Hamiltonian by at least
\[
\Delta H = 2Y_N-|w|(\partial [N]^2).%\beta bN^2.
\]

Using Observation \ref{obs:flip} and the union bound,
%\[\Prob_\eta(Y_N\geq bN^2~\text{and}~|w|(\partial[N]^2)\leq aN))<\exp\{(-2bN^2+aN)\beta\},\]
\begin{align}
&\Prob_\eta(\{ E_N\geq \frac{\rho_0}{2}N^2\}\cap\{Y_N\geq \frac{2\lambda_2}{3}E_N\}\cap\{|w|(\partial[N]^2)\leq aN)\})=\\
\notag&=\sum_{m=\frac{\rho_0}{2}N^2}^\infty \Prob_\eta(\{ E_N=m\}\cap\{Y_N\geq \frac{2\lambda_2}{3}m\}\cap\{|w|(\partial[N]^2)\leq aN)\})\leq\\
\notag&\quad\quad\quad\quad\quad\quad\leq \sum_{m=\frac{\rho_0}{2}N^2+4N}^\infty A^m e^{-\beta(\frac{2\lambda_2}{3}m-aN)}\leq \sum_{m=\frac{\rho_0}{2}N^2+4N}^\infty A^m e^{-\beta(\frac{\lambda_2}{2}m)}
\end{align}
for $N$ large enough, and $A=32$ is the constant of Lemma \ref{not_2_many_connected}, so that $A^m$ bounds the number of possibilities for the connected graph $P_N\cup\partial[N]^2.$
We now assume that $\beta^*$ satisfies
\begin{equation}\label{eq:beta1stconst}
\text{for all}~\beta >\beta^*,~Ae^{-\beta\lambda_2/2}<\frac{1}{2}.
\end{equation}
For all $\beta>\beta^*$ we get
\begin{align}
&\Prob_\eta(\{ E_N\geq \frac{\rho_0}{2}N^2\}\cap\{Y_N\geq \frac{2\lambda_2}{3}E_N\}\cap\{|w|(\partial[N]^2)\leq aN)\})<\\
\notag& \quad\quad\quad\quad\quad\quad\quad\quad\quad\quad\quad\quad\quad\quad\quad\quad\quad\quad\quad<2(Ae^{-\beta\lambda_2/2})^{\frac{\rho_0}{2}N^2}\to0,
\end{align}
as $N\to\infty.$
This contradicts \eqref{eq:rho_N} whenever $\rho_0\neq 0.$
%
%
% the probability of the existence of such partition of the square into regions is bounded by $e^{-\beta b N^2}.$ Thus, a standard Borel-Cantelli argument, as in Lemma \ref{not_2_many_connected}, shows this cannot happen for all $N$ large enough. This contradicts the existence of unsatisfied bi-infinite trees.

It is left only to rule out the existence of single infinite components. Observe that, at least a priori, single infinite trees do not necessarily divide a large square into different regions, even if they intersect it.
Yet, it turns out that under the translation invariance assumption such a partition does exist.
\begin{figure}[h]
\begin{center}
\includegraphics[width=0.7\textwidth]{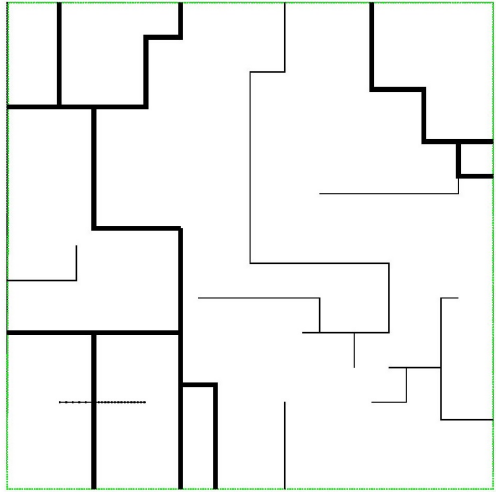}
\caption{The intersection of an infinite lattice tree with a large square. Bold lines represent bridges.}
\end{center}
\end{figure}
%Assume from now until the end of the proof, that the forest is composed of single infinite tree components.
Let $\rho_0$ be the probability an edge belongs to an infinite component of the forest. Write $P_N$ for the graph composed of edges in $[N]^2$ which lay on bridges and $E_N=|E(P_N)|.~E_N$ is expected to be at least $cN\log{N},$ for some constant $c$ which \emph{depends linearly} on $\rho_0,$ by Lemma \ref{lem:1_inf_trees}. In addition, $ E_N\leq N^2.$ A computation then shows that \[\Prob_\eta( E_N\geq \frac{c}{2}N\log{N})\geq\frac{\frac{c}{2}N\log{N}}{N^2-\frac{c}{2}N\log{N}}>\frac{c\log{N}}{2N}.\]
Writing $Y_N=|w|(P_N)$ and repeating the argument which led to estimate \ref{eq:rho_N},
we obtain
\begin{align}\label{eq:rho_N2}
&\Prob_\eta( E_N\geq\frac{c}{2}N\log{N},~Y_N\geq d E_N,~\text{and}~|w|(\partial[N]^2)\leq aN))>\\
\notag&\quad\quad\quad\quad\quad\quad\quad\quad\quad\quad\quad\quad\quad\quad\quad\quad\quad>\frac{c\log{N}}{2N} -N^{-4}-e^{-N}.
\end{align}
for $d=\frac{2\lambda_2}{3}.$

$P_N$ divides $[N]^2$ into domains. By using the classical \emph{Five Colors Theorem} (\cite{He}), we can color these domains in five colors so that neighboring domains have different colors. Note that by Euler's formula, the number of different regions inside the square is $E_N-|P_N|<E_N.$
Hence there are at most $5^{E_N}$ ways to color the different domains.

As in the case of bi-infinite trees, there is at least one color, say white, such that the total weight of the boundaries of white domains is at most $-\frac{2}{5}Y_N+\frac{1}{5}|w|\partial[N]^2.$
Thus, flipping the spins inside this region leaves us with
\[
|\Delta H| \geq \frac{4}{5}Y_N-\frac{2}{5}|w|\partial[N]^2,
\]
Again, Observation \ref{obs:flip} and the union bound imply,
\begin{align}
&\Prob_\eta( E_N\geq\frac{c}{2}N\log{N},~Y_N\geq d E_N,~\text{and}~|w|(\partial[N]^2)\leq aN))=\\
\notag&=\sum_{m=\frac{c}{2}N\log{N}}^\infty \Prob_\eta(\{ E_N=m\}\cap\{Y_N\geq dm\}\cap\{|w|(\partial[N]^2)\leq aN)\})\leq\\
\notag&\quad\quad\leq \sum_{m=\frac{c}{2}N\log{N}+4N}^\infty (5A)^me^{-\beta(\frac{4d}{5}m-\frac{2a}{5}N)}\leq \sum_{m=\frac{c}{2}N\log{N}+4N}^\infty (5A)^m e^{-\beta(\frac{dm}{5})}
\end{align}
for $N$ large enough. The $A^m$ term is as before, the $5^m$ term comes from the number of different possible colorings, so together they bound the number of candidates for the set of disjoint cycles we need for applying Observation \ref{obs:flip}.
We now add another constraint on $\beta^*,$
\begin{equation}\label{eq:beta2const}
5Ae^{-\beta^* d/5}<\frac{1}{2},
\end{equation}
For such $\beta>\beta^*$ we have
\begin{align}
&Prob_\eta( E_N\geq\frac{c}{2}N\log{N},~Y_N\geq d E_N,~\text{and}~|w|(\partial[N]^2)\leq aN))<\\
\notag& \quad\quad\quad\quad\quad\quad\quad\quad\quad\quad\quad\quad\quad\quad\quad\quad\quad\quad\quad<2(5Ae^{-\beta d/5})^{\frac{c}{2}N\log{N}}.
\end{align}
This tends to $0$ faster than the right hand side of equation \ref{eq:rho_N2}, unless $\rho_0=0.$
To summarize, fix $\beta^*\geq\beta_0$ which satisfies constraints \eqref{eq:beta1stconst} and \eqref{eq:beta2const}. Then for any $\beta >\beta^*,$ and any translation invariant $EA_\beta$ measure, the probability of existence of infinite unsatisfied clusters is $0.$
In other words, the unsatisfied dual edges do not percolate.
\end{proof}
\section{Open problems}
There are very few rigorous results in the field of EA Ising spin glass model for lattices. In this last section we present some natural open problems.
%\begin{enumerate}
%\item[Uniqueness of measures]

\emph{Uniqueness of measures.}
Is there a unique EA spin glass measure for any inverse temperature? In case of temperature $0,$ is there a unique (up to global spin flip) ground state?

%\item[Higher dimensions]
\emph{Higher dimensions.}
Is there an analog for Theorem \ref{thm:main_thm} for $\Z^d,~d>2?$

%\item[Phase transition]
\emph{Phase Transition.}
Are there critical temperatures for percolation of unsatisfied dual edges?
In the case of planar square lattice, for $\beta=0,$  one possible translation invariant EA$_0$ measure is the edge bernoulli percolation measure, with $p=0.5.$ Thus, there is no percolation in this case. Yet, Theorem \ref{thm:main_thm} holds for much general infinite planar graphs, and for some of them there is a percolation in $p=0.5,$ as pointed out to us by Gady Kozma. Yet, we do not even know if such a phenomenon is monotonic in temperature. One can try to find critical temperatures and phase transitions for other properties as well.

\emph{Quantitative results}
%\item[Quantitative results]
There are many quantitative questions one can ask. For example, could one calculate the density of unsatisfied edges or their expected value for a given temperature?

%\item[Loop dynamics]
\emph{Loop dynamics.}
The zero temperature loop dynamics defined in \cite{BT} is the following dynamical process on spin configurations on weighted graphs. Any finite connected subset $C$ of the graph is attached a Poisson clock of rate $r_C.$ Whenever it rings, if the (restricted) Hamiltonian decreases by flipping all the spins of $C,$ they are flipped. Otherwise they are left unchanged.
It is called the loop dynamics since for a planar graph the connected sets can be represented by their boundary in the dual graph, which is a loop.

In \cite{BT} this process is mainly considered for the lattice $\Z^2.$ It is proved, using techniques similar to those of the loop erasing process, that one can find rates so that the resulting process is well defined and translation invariant. It is moreover proved that all the weak subsequential limits are ground states of the EA Ising spin glass. These results can be easily generalized to much more general graphs.
A natural question is whether or not this process has a limit, and for which families of rates.

There is a natural generalization of the loop dynamics to positive temperatures. Within the above setting, whenever the clock of $C$ rings, if the energy change which occurs when flipping the spins of $C$ is $2\Delta,$ flip $C$ with probability $\frac{e^{-\beta\Delta}}{e^{-\beta\Delta}+e^{\beta\Delta}}.$

Again one can show that there are rates for which this process is well defined and translation invariant. Are the subsequential weak limits of this process EA$_\beta$ spin glass measures? Does it converge to a weak limit?
%\end{enumerate}

\bibliographystyle{plain}

\end{document}